\newcommand{\cmark}{\ding{51}}%
\newcommand{\xmark}{\ding{55}}%
\newtheorem{theorem}{Theorem}
\newtheorem{lemma}[theorem]{Lemma}
\newtheorem{proposition}[theorem]{Proposition}
\newtheorem{claim}[theorem]{Claim}
\newtheorem{example}[theorem]{Example}
\newcommand{\PROPavg}{\textsf{PROPavg}\xspace}
\newcommand{\PROPm}{\textsf{PROPm}\xspace}
\newcommand{\PROPx}{\textsf{PROPx}\xspace}
\newcommand{\PROP}{\textsf{PROP}\xspace}
\newcommand{\EFX}{\textsf{EFX}\xspace}
\newcommand{\EF}{\textsf{EF}\xspace}
\title{ Proportional Allocation of Indivisible Goods \\ up to the Least Valued Good on Average\thanks{An earlier version of this work was presented in ISAAC 2022\cite{kobayashi}.} }
\author{
\begin{tabular}[h]{cc}  
Yusuke Kobayashi\thanks{Research Institute for Mathematical Sciences, Kyoto University, Japan.
E-mail: yusuke@kurims.kyoto-u.ac.jp} \and Ryoga Mahara\thanks{Department of Mathematical Informatics, University of Tokyo.
E-mail: mahara@mist.i.u-tokyo.ac.jp}
\end{tabular}
}
\date{}
\begin{document}
\maketitle
\begin{abstract}
We study the problem of fairly allocating a set of indivisible goods to multiple agents and focus on the proportionality, which is one of the classical fairness notions.
Since proportional allocations do not always exist when goods are indivisible, approximate concepts of proportionality have been considered in the previous work.
Among them, proportionality up to the maximin good (\PROPm) has been the best approximate notion of proportionality 
that can be achieved for all instances~\cite{ijcai2021-4}. 
In this paper, we introduce the notion of {\it proportionality up to the least valued good on average} (\PROPavg), which is a stronger notion than \PROPm, and show that a \PROPavg allocation always exists for all instances and can be computed in polynomial time. 
Our results establish \PROPavg as a notable non-trivial fairness notion that can be achieved for all instances.
Our proof is constructive, and based on a new technique that 
generalizes the cut-and-choose protocol and uses a recursive technique. 
\end{abstract}

\thispagestyle{empty}
\newpage
\setcounter{page}{1}

\section{Introduction} 
\label{sec: intro}

\subsection{Proportional Allocation of Indivisible Goods}

We study the problem of fairly allocating a set of indivisible goods to multiple agents under additive valuations.
Fair division of indivisible goods is a fundamental and well-studied problem in Economics and Computer Science.
We are given a set $M$ of $m$ indivisible goods and a set $N$ of $n$ agents with individual valuations.
Under additive valuations, each agent $i\in N$ has a value $v_i(\{g\})\ge 0$ for each good $g$ and her value for a bundle $S$ of goods is equal to the sum of the values of each good $g\in S$, i.e., $v_i(S)=\sum_{g\in S} v_i(\{g\})$.
An indivisible good can not be split among multiple agents and this causes finding a fair division to be a difficult task.

One of the standard notions of fairness is {\it proportionality (\PROP)}.
Let an {\it allocation} $X=(X_1,X_2,\ldots, X_n)$ denote a partition of $M$ to $N$ into $n$ bundles such that $X_i$ is allocated to agent $i$.
An allocation $X$ is {\it proportional} if $v_i(X_i)\ge \frac{1}{n}v_i(M)$ holds for each agent $i$.
In other words, in a proportional allocation, every agent receives a set of goods whose value is at least $1/n$ fraction of the value of the entire set. 
Unfortunately, proportional allocations do not always exist when goods are indivisible.
For instance, when allocating a single indivisible good to more than one agents it is impossible to achieve any proportional allocation.
Thus, several relaxations of proportionality such as {\PROP}1, \PROPx, and \PROPm have been considered in the previous work.

Each of these notions requires that each agent $i\in N$ receives value at least $\frac{1}{n}v_i(M)-d_i(X)$, where $d_i(X)$ is a nonnegative real number appropriately defined for each notion.
{\it Proportionality up to the largest valued good} ({\PROP}1) is a relaxation of proportionality that was introduced by~Conitzer et al.~\cite{conitzer2017fair}.
{\PROP}1 requires $d_i(X)$ to be the largest value that agent $i$ has for any good allocated to other agents, i.e., $d_i(X)=\max_{k\in N\setminus \{i\}} \max_{g\in X_k} v_i(\{g\})$.
It is shown in~\cite{conitzer2017fair} that there always exists a Pareto optimal\footnote{An allocation $X=(X_1,\ldots, X_n)$ is Pareto optimal if there is no allocation $Y=(Y_1,\ldots, Y_n)$ such that $v_i(Y_i) \ge v_i(X_i)$ for any agent $i$, and there exists an agent $j$ such that $v_j(Y_j) > v_j(X_j)$.} allocation that satisfies {\PROP}1.
Moreover, Aziz et al.~\cite{aziz2020polynomial} presented a polynomial-time algorithm that finds a {\PROP}1 and Pareto optimal allocation even in the presence of chores, i.e., some items can have negative value.

Another relaxation is {\it proportionality up to the least valued good} (\PROPx), which is much stronger than ${\PROP}1$. 
\PROPx requires $d_i(X)$ to be the least value that agent $i$ has for any good allocated to other agents, i.e., $d_i(X)=\min_{k\in N\setminus \{i\}} \min_{g\in X_k} v_i(\{g\})$. 
Moulin~\cite{moulin2019fair} gave an example for which no \PROPx allocation exists, and Aziz et al.~\cite{aziz2020polynomial} gave a simpler example. 

Recently, Baklanov et al.~\cite{baklanov2021achieving} introduced {\it proportionality up to the maximin good} (\PROPm). 
\PROPm requires $d_i(X)=\max_{k\in N\setminus \{i\}} \min_{g\in X_k} v_i(\{g\})$, 
which shows that \PROPm is the notion between {\PROP}1 and \PROPx. 
It is shown in~\cite{baklanov2021achieving} that a \PROPm allocation always exists for instances with at most five agents, and later 
Baklanov et al.~\cite{ijcai2021-4} showed that there always exists a \PROPm allocation for any instance and it can be computed in polynomial time. 
To the best of our knowledge, \PROPm has been the best approximate notion of proportionality that is shown to exist for all instances. 

However, \PROPm is not a good enough relaxation of proportionality in some cases. 
For example, suppose that there exists a good $g\in M$ for which every agent has value at least $1/n$ fraction of the value of $M$. 
Then allocating $g$ to some agent $i$ and allocating all the goods in $M\setminus \{g\}$ to another agent achieves a \PROPm allocation, 
whereas it will be better to allocate $M\setminus \{g\}$ to $N \setminus \{i\}$ in a fair manner (see Example~\ref{example1}). 
This motivates the study of better relaxations of proportionality than \PROPm.

\subsection{Our Contribution}
\label{sec: results}

In this paper, we introduce {\it proportionality up to the least valued good on average (\PROPavg)}, a new relaxation of proportionality, and show that there always exists a \PROPavg allocation for all instances and can be computed in polynomial time.
\PROPavg requires $d_i(X)$ to be the average of minimum values that agent $i$ has for any good allocated to other agents, i.e., $d_i(X)=\frac{1}{n-1}\sum_{k\in N\setminus \{i\}} \min_{g\in X_k} v_i(\{g\})$.
It is easy to see that \PROPavg implies \PROPm.
Note that a similar and slightly stronger notion was introduced by Baklanov et al.~\cite{baklanov2021achieving} with the name of Average-\EFX (\textsf{Avg-EFX}), 
where $d_i(X)=\frac{1}{n}\sum_{k\in N\setminus \{i\}} \min_{g\in X_k} v_i(\{g\})$. 
It remains open whether an \textsf{Avg-EFX} allocation always exists even in the case of four agents. 
The following example demonstrates that \PROPavg (or \textsf{Avg-EFX}) is a reasonable relaxation of proportionality compared to \PROPm.

\begin{example}\label{example1}
Suppose that $N=\{1, 2, 3\}$, $M=\{g_1, g_2, g_3, g_4\}$, and each agent has an identical additive valuation defined as follows: $v(\{g_1\})=10, v(\{g_2\})=v(\{g_3\})=7, v(\{g_4\})=6$.

Table~\ref{table:example} summarizes whether several allocations satisfy or not each fairness notions in this instance.
As $v(\{g_1\}) \ge 10$, the allocation $(\{g_1, g_2, g_3, g_4\}, \emptyset, \emptyset)$ satisfies {\PROP}1 even though agent $2$ and $3$ receive no good.
Similarly, the allocation $(\{g_1\}, \{g_2, g_3, g_4\}, \emptyset)$ satisfies \PROPm even though agent $3$ receives no good.
It is easy to see that every agent has to receive at least one good in any \PROPavg allocation in this instance.


\end{example}

\begin{table}[bth]
 \caption{The relationship between several allocations and fairness notions in Example~\ref{example1}. The symbol ``\cmark'' indicates that the corresponding allocation satisfies the corresponding fairness. The symbol ``\xmark'' indicates that the corresponding allocation does not satisfy the corresponding fairness. See Section~\ref{sec: related} for the definition of EFX.}
 \label{table:example}
 \centering
  \begin{tabular}{c|c|c|c|c}\hline
	    & \EFX & \PROPavg & \PROPm & {\PROP}1 \\ \hline
	$(\{g_1\}, \{g_2, g_4\}, \{g_3\})$ &  \cmark & \cmark & \cmark & \cmark \\ \hline
	$(\{g_1\}, \{g_2, g_3\}, \{g_4\})$ &   \xmark & \cmark & \cmark & \cmark \\ \hline
	$(\{g_1\}, \{g_2, g_3, g_4\}, \emptyset)$ &  \xmark & \xmark & \cmark & \cmark \\ \hline
	$(\{g_1, g_2, g_3, g_4\}, \emptyset, \emptyset)$ & \xmark & \xmark & \xmark & \cmark \\ \hline
  \end{tabular}
\end{table}

The main contribution of this paper is the following theorem which extends the results on \PROPm allocations shown by Baklanov et al.~\cite{ijcai2021-4}. 

\begin{theorem}\label{thm: main}
There always exists a \PROPavg allocation when each agent has a non-negative additive valuation.
Furthermore, it can be computed in polynomial time.
\end{theorem}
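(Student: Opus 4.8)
The plan is to give a constructive algorithm and to prove its correctness by induction on the number of agents $n$, generalizing the classical two-agent cut-and-choose protocol. The base case $n=1$ is trivial, and for $n=2$ ordinary cut-and-choose already suffices: letting agent $1$ split $M$ into two bundles that are as balanced as possible from her viewpoint and letting agent $2$ choose yields a \PROPavg allocation (for two agents \PROPavg coincides with \PROPx). For general $n$ the idea is to perform one ``cut'' that reduces the instance to strictly smaller instances on which the induction hypothesis applies, and then to argue that reassembling the recursive allocations preserves \PROPavg on the original instance.

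Concretely, I would designate one agent as a divider and have her partition $M$ into bundles that are balanced with respect to her own additive valuation, so that she is guaranteed at least her proportional share up to a single good no matter which bundle she is eventually left with; this makes the divider \PROPavg-satisfied for free. The remaining agents then ``choose'': each non-divider agent identifies the bundles she finds acceptable (those meeting the appropriate \PROPavg threshold), and I would attempt to assign distinct acceptable bundles to all of them via a Hall-type matching. When such a matching exists the allocation is completed directly; when Hall's condition fails there is a set of agents confined to too few bundles, and I would recurse on the sub-instance induced by those agents together with the goods in those bundles. This is exactly where the recursive technique enters, and it mirrors the structure used for \PROPm in~\cite{ijcai2021-4}.

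The main obstacle is the correctness of the bound transfer across the recursion, and it is genuinely delicate because \PROPavg is strictly stronger than \PROPm. The difficulty is that the averaging denominator in the definition of \PROPavg is $n-1$, whereas a sub-instance on $n'<n$ agents only supplies the weaker normalization $\tfrac{1}{n'-1}$ when the induction hypothesis is applied; a naive composition therefore loses a factor and fails the target inequality $v_i(X_i)\ge \tfrac1n v_i(M)-\tfrac{1}{n-1}\sum_{k\neq i}\min_{g\in X_k}v_i(\{g\})$. To overcome this I would strengthen the inductive statement so that the recursion returns not merely a \PROPavg allocation of the sub-instance but one that controls, with the correct global normalization, the accumulated ``least valued good'' deficits; the cut must be engineered so that each unit of indivisibility error charged to an agent in the current step is compensated by the minimum value she assigns to a good in one of the bundles removed from her sub-instance. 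Verifying that this charging scheme exactly accounts for the averaged term is the crux of the argument.

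Finally, for the running time I would observe that the divider's balanced partition can be computed greedily, that the acceptability relation and the Hall matching are computable in polynomial time, and that each recursive call strictly decreases the number of agents; hence the recursion has depth $O(n)$ with polynomial work per level, giving the overall polynomial-time algorithm claimed in Theorem~\ref{thm: main}.
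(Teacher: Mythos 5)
Your proposal captures the right high-level ingredients---a generalized cut-and-choose, per-agent acceptability sets, Hall's theorem, and recursion on the number of agents---but the two steps that carry all the weight are either wrong or left unproven. First, the role you assign to the singled-out agent does not work as stated: a divider who balances the bundles under her own valuation is guaranteed her proportional share only up to a \emph{largest} good, i.e., a {\PROP}1-type guarantee, and {\PROP}1 does not imply \PROPavg (the \PROPavg deficit is the average of the \emph{minimum} values of the other bundles, which can be arbitrarily smaller than a single large good), so the divider is not ``\PROPavg-satisfied for free.'' The paper makes the distinguished agent the \emph{chooser} instead: she picks her favorite of $n$ bundles and thus gets value at least $\frac{1}{n}$ outright, and the other $n-1$ agents are collectively responsible for the cut. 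Moreover, a partition balanced for the divider's valuation gives no control over which bundles the \emph{other} agents find acceptable, so you have no initial guarantee that Hall's condition even comes close to holding. Second, your only tool for repairing a failed Hall condition---recursing on the violating agents and the goods in their bundles---runs directly into the normalization mismatch ($\frac{1}{n'-1}$ versus $\frac{1}{n-1}$) that you yourself identify as the crux; ``strengthening the inductive statement'' and designing a ``charging scheme'' are declarations of intent, and no candidate invariant or charging rule is supplied, so the argument does not close.

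The paper's resolution has no counterpart in your plan. It maintains a partition $X=(X_u)_{u\in V_2}$ indexed by $n$ slots including a reserve slot $r$, and defines the \PROPavg-graph so that the edge condition for $(i,u)$ deliberately \emph{omits} $m_i(X_r)$ from the average. When the matching condition (P2) fails on a minimal violating set $S^*$, every $i\in S^*$ satisfies $(i,r)\notin E$ (Claim~\ref{clm: S*01}), and Lemma~\ref{lem: removable} shows one can move a single least-valued good from some $X_{u^*}$ into $X_r$ without destroying the edge $(i^*,u^*)$ or any edge into other slots (Claim~\ref{cl: keepingedges}); hence a perfect matching of $G_X-r$ survives while $|X_r|$ strictly increases, and after at most $m$ moves (P2) holds and the choose-then-match step finishes (Lemma~\ref{lem: termination}). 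Recursion on $n-1$ agents is used only to build the \emph{initial} partition satisfying (P1), and even that requires first peeling off agents who value a single good at roughly $\frac{1}{n}$ or more---otherwise the slack computation in Lemma~\ref{lem: rec} that converts a \PROPavg guarantee with denominator $n-2$ into an edge with denominator $n-1$ breaks down. Without the reserve slot, the modified edge definition, and the good-moving potential argument, the approach you outline cannot be completed as described.
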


Known results on relaxations of proportionality are summarized in Table~\ref{table:01}. 

\begin{table}
 \caption{Relaxations of Proportionality}
 \label{table:01}
 \centering
  \begin{tabular}{llll}
   \hline
	    & $d_i(X)$ & Does it always exist? & Time complexity\\
   \hline 
   \PROPx & $\min_{k\in N\setminus \{i\}} \min_{g\in X_k} v_i(\{g\})$ &  No \cite{moulin2019fair,aziz2020polynomial} & --\\
   \textsf{Avg-EFX} & $\frac{1}{n}\sum_{k\in N\setminus \{i\}} \min_{g\in X_k} v_i(\{g\})$ & Open & Open\\
   \PROPavg &  $\frac{1}{n-1}\sum_{k\in N\setminus \{i\}} \min_{g\in X_k} v_i(\{g\})$ & Yes  \textbf{(our result)} & Polynomial time  \textbf{(our result)}\\
   \PROPm & $\max_{k\in N\setminus \{i\}} \min_{g\in X_k} v_i(\{g\})$ & Yes  \cite{ijcai2021-4} &Polynomial time\\
   {\PROP}1 & $\max_{k\in N\setminus \{i\}} \max_{g\in X_k} v_i(\{g\})$ & Yes \cite{aziz2020polynomial} &Polynomial time\\
   \hline
  \end{tabular}
\end{table}


\subsection{Our Techniques}
\label{sec: tech}

Our algorithm can be seen as a generalization of {\it cut-and-choose} protocol, which is a well-known procedure to fairly allocate resources between two agents.
In the cut-and-choose protocol, one agent partitions resources into two bundles for her valuation, and then the other agent chooses the best bundle of the two for her valuation.
We generalize this protocol from two agents to $n$ agents in the following way: 
some $n-1$ agents partition the goods into $n$ bundles, and then the remaining agent chooses the best bundle among them for her valuation.
To apply this protocol, it suffices to show that there exists a partition of the goods into $n$ bundles such that 
no matter which bundle the remaining agent chooses, the remaining $n-1$ bundles can be allocated to the first $n-1$ agents fairly.

In our algorithm, we find such a partition by using an auxiliary graph called {\it \PROPavg-graph}. 
A formal definition of the \PROPavg-graph is given in Section~\ref{sec:propavggraph}.
In  Section~\ref{sec:mainproof}, we show the existence of a \PROPavg allocation by constructing a pseudo-polynomial time algorithm for finding it.
In Section~\ref{sec:poly}, by improving our algorithm in Section~\ref{sec:mainproof},  we show how to find a \PROPavg allocation in polynomial time.

Let us emphasize that introducing the \PROPavg-graph is a key technical ingredient in this paper. 
It is also worth noting that 
Hall's marriage theorem~\cite{hall1935representatives}, a classical and famous result in discrete mathematics, plays an important role in our argument.

\subsection{Related Work}
\label{sec: related}
Fair division of divisible resources is a classical topic starting from the 1940's~\cite{Steinhaus} and has a long history in multiple fields such as Economics, Social Choice Theory, and Computer Science\cite{robertson1998fair, brams1996fair, brandt2016handbook, moulin2004fair}.
On the other hand, fair division of indivisible items has actively studied in recent years (see, e.g., \cite{amanatidis2022fair} for a recent survey).

In the context of fair division, besides proportionality, {\it envy-freeness (\EF)} is another well-studied notion of fairness. 
An allocation is called {\it envy-free} if for each agent, she receives a set of goods for which she has value at least value of the set of goods any other agent receives.
As in the proportionality case, envy-free allocations do not always exist when goods are indivisible, and several relaxations of envy-freeness have been considered.
Among them, a notable one is {\it envy-freeness up to one good} (\textsf{EF}1)~\cite{budish2011combinatorial}. 
It is known that there always exists an \textsf{EF}1 allocation, and it can be computed in polynomial time~\cite{lipton2004approximately}. 
Another notable relaxation is {\it envy-freeness up to any good} (\EFX)~\cite{caragiannis2019unreasonable}.
An allocation $X=(X_1,\ldots ,X_n)$ is called \EFX if for any pair of agents $i,j \in N$, $v_i(X_i) \ge v_i(X_j)-m_i(X_j)$, where $m_i(X_j)$ is the value of the least valuable good for agent $i$ in $X_j$.
It is one of the major open problems in fair division whether \EFX allocations always exist or not. 
As mentioned in~\cite{baklanov2021achieving}, it is easy to see that \EFX implies \textsf{Avg-EFX}.
As with \EFX, it is not known whether \textsf{Avg-EFX} allocations always exist for instances with four or more agents.
The relationship among notions mentioned above and the existence results are summarized in Figure~\ref{fig: 1}.
\begin{figure}[tbp]
  \begin{center}
   \includegraphics[scale=0.5]{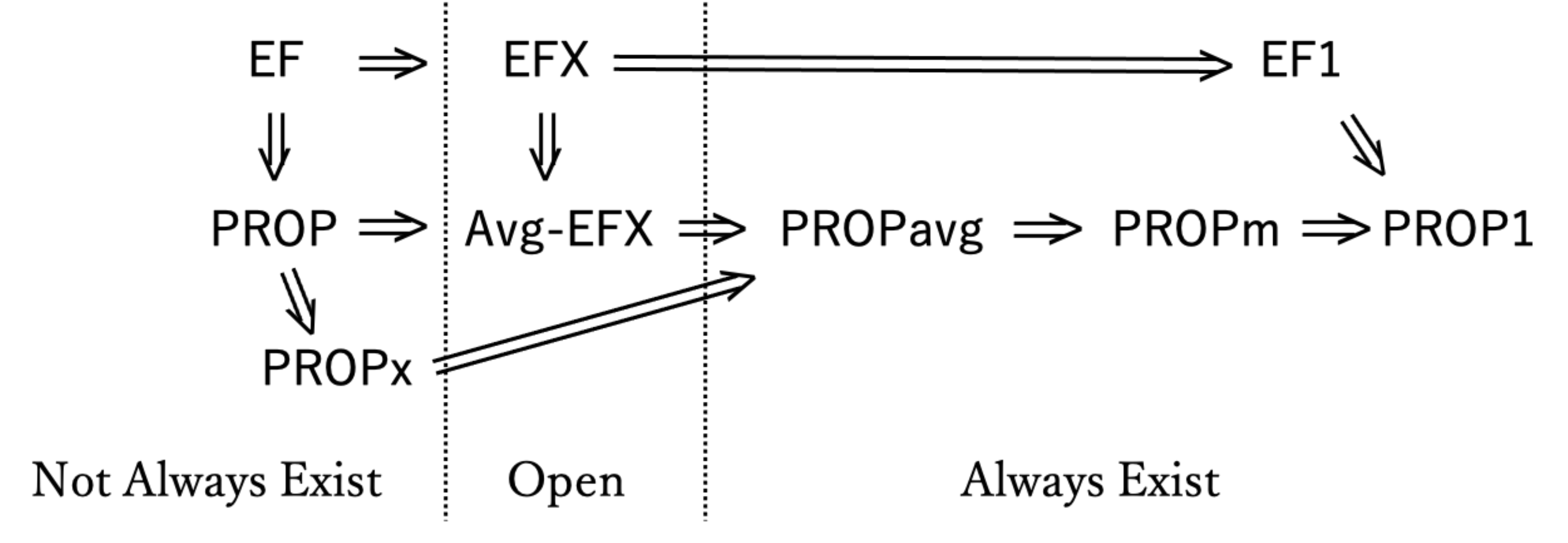}
 \end{center}
  \caption{The relationship among some fairness notions and the existence results for these fairness notions. \EF, \PROP, and \PROPx do not always exist, whereas \PROPavg, \PROPm, {\EF}1, and {\PROP}1 always exist when each agent has non-negative additive valuation. It is not known whether \EFX or \textsf{Avg-EFX} always exist or not for all instances.} 
  \label{fig: 1}
\end{figure}
There have been several studies on the existence of  an \EFX allocation for restricted cases.
Plaut and Roughgarden~\cite{plaut2020almost} showed that an \EFX allocation always exists for instances with two agents even when each agent can have more general valuations than additive valuations.
Chaudhury et al.~\cite{chaudhury2020efx} showed that an \EFX allocation always exists for instances with three agents.
It is not known whether \EFX allocations always exist even in the case of four agents having additive valuations.
There are several studies of the cases with restricted valuations.  
For example, there always exists an \EFX allocation when valuations are identical \cite{plaut2020almost}, two types \cite{mahara2020existence, mahara2021extension}, binary \cite{barman2018greedy, darmann2015maximizing}, or bi-valued \cite{amanatidis2021maximum}.

Another approach related to \EFX is a setting to allow unallocated goods, which is known as {\it \EFX-with-charity}.
Obviously, without any constraints, the problem is trivial: leaving all goods unallocated results in an envy-free allocation.
Thus, the goal here is to find allocations with better guarantees.
For additive valuations, Caragiannis et al.~\cite{caragiannis2019envy} showed that there exists an \EFX allocation with some unallocated goods where every agent receives at least half the value of her bundle in a maximum {\it Nash social welfare} allocation\footnote{This is an allocation that maximizes $\Pi_{i = 1}^n v_i(X_i)$.}.
For normalized and monotone valuations, Chaudhury et al.~\cite{chaudhury2021little} showed that there exist an \EFX allocation and a set of unallocated goods $U$ such that every agent has value for her own bundle at least value for $U$, and $|U| < n$.
Berger et al.~\cite{berger2022almost} showed that the number of the unallocated goods can be decreased to $n-2$, and to just one for the case of four agents having nice cancelable valuations, which are more general than additive valuations.
Mahara~\cite{mahara2021extension} showed that the number of the unallocated goods can be decreased to $n-2$ for normalized and monotone valuations, which are more general than nice cancelable valuations.
For additive valuations, Chaudhury et al.~\cite{chaudhury2021improving} presented a polynomial-time algorithm for finding 
an approximate \EFX allocation with at most a sublinear number of unallocated goods and high Nash social welfare. 

\section{Preliminaries}
\label{sec: pre}
Let $N=\{1,\ldots, n\}$ be a set of $n$ agents and $M$ be a set of $m$ goods. 
We assume that goods are indivisible: a good can not be split among multiple agents.
Each agent $i \in N$ has a non-negative valuation $v_i: 2^M \rightarrow \mathbb{R}_{\ge 0}$, where $2^M$ is the power set of $M$.
We assume that each valuation $v_i$ is {\it additive}, i.e., $v_i(S)=\sum_{g\in S} v_i(\{g\})$ for any $S\subseteq M$.
Note that since valuations are non-negative and additive, they have to be {\it normalized}: $v_i(\emptyset) = 0$ and {\it monotone}: $S\subseteq T$ implies $v_i(S) \le v_i(T)$ for any $S,T \subseteq M$.
For ease of explanation, we normalize the valuations so that $v_i(M)=1$ for all $i\in N$.

To simplify notation, we denote $\{1,\ldots, k\}$ by $[k]$  for any positive integer $k$, write $v_i(g)$ instead of $v_i(\{g\})$ for $g\in M$, and use $S\setminus g$ and $S\cup g$ instead of $S\setminus \{g\}$ and  $S\cup \{g\}$, respectively.

We say that $X=(X_1,X_2,\ldots, X_n)$ is an {\it allocation of $M$ to $N$} if it is a partition of $M$ into $n$ disjoint subsets such that each set is indexed by $i \in N$. Each $X_i$ is the set of goods given to agent $i$, which we call a {\it bundle}. 
It is simply called an {\em allocation to $N$} if $M$ is clear from context. 
For $i\in N$ and $S \subseteq M$, let $m_i(S)$ denote the value of the least valuable good for agent $i$ in $S$, that is, 
$m_i(S)= \min_{g\in S} \{v_i(g)\}$ if $S \neq \emptyset$ and $m_i(\emptyset)=0$.  
For an allocation $X=(X_1,X_2,\ldots, X_n)$ to $N$, we say that an agent $i$ is {\it \PROPavg-satisfied} by $X$ if 
$$
v_i(X_i) +\frac{1}{n-1}\sum_{k\in [n]\setminus i} m_i(X_k) \ge \frac{1}{n}, 
$$
where we recall that $v_i(M)=1$.
In other words, agent $i$ receives a set of goods for which she has value at least $1/n$ fraction of her total value 
minus the average of minimum value of the set of goods any other agent receives.
An allocation $X$ is called {\it \PROPavg} if every agent  $i \in N$ is \PROPavg-satisfied by $X$.

Let  $G=(V, E)$ be a graph. For $S\subseteq V$, let $\Gamma_{G}(S)=\{v\in V\setminus S \mid (s,v)\in E~\text{for some}~s\in S\}$ denote the set of neighbors of $S$ in $G$.
For $v\in V$, let $G-v$ denote the graph obtained from $G$ by deleting $v$.
A {\it perfect matching} in $G$ is a set of pairwise disjoint edges of $G$ covering all the vertices of $G$.


\section{Key Ingredient: \PROPavg-Graph}
\label{sec:propavggraph}

In order to prove Theorem~\ref{thm: main}, 
we give an algorithm for finding a \PROPavg allocation. 
As described in Section~\ref{sec: tech}, our algorithm is a generalization of the cut-and-choose protocol that consists of the following three steps. 
\begin{enumerate}
\item
We partition the goods into $n$ bundles without assigning them to agents. 
\item 
A specified agent, say $n$, chooses the best bundle for her valuation.  
\item
We determine an assignment of the remaining bundles to the agents in $N \setminus n$. 
\end{enumerate}
The partition given in the first step is represented by an allocation of $M$ to a newly introduced set of size $n$, say $V_2$, and 
the assignment in the third step is represented by a matching in an auxiliary bipartite graph, which we call {\it \PROPavg-graph}. 
In this section, we define the \PROPavg-graph and its desired properties. 

Let $V_2$ be a set of $n$ elements and fix a specified element $r \in V_2$. 
We say that $X=(X_u)_{u\in V_2}$ is an {\it allocation to $V_2$} if it is a partition of $M$ into $n$ disjoint subsets such that each set is indexed by an element in $V_2$, that is,  
$\bigcup_{u \in V_2} X_u = M$ and $X_u \cap X_{u'} = \emptyset$ for distinct $u, u' \in V_2$. 
For an allocation $X=(X_u)_{u\in V_2}$ to $V_2$, we define a bipartite graph $G_{X}=(V_1, V_2; E)$ called {\it \PROPavg-graph} as follows. 
The vertex set consists of $V_1 = N \setminus n$ and $V_2$, and the edge set $E$ is defined by
$$
(i,u) \in E \iff v_i(X_u)+\frac{1}{n-1}\sum_{u'\in V_2 \setminus \{r, u\}} m_i(X_{u'}) \ge \frac{1}{n}
$$
for $i \in V_1$ and $u \in V_2$. 
It should be emphasized that the summation is taken over $V_2 \setminus \{r, u\}$, i.e., $m_i(X_{r})$ is not  counted, in the above definition, which is crucial in our argument. 
The following lemma shows that the \PROPavg-graph is closely related to the definition of \PROPavg-satisfaction. 

\begin{lemma}\label{ob: v1}
Suppose that $G_X=(V_1, V_2; E)$ is the \PROPavg-graph for  an allocation $X=(X_u)_{u\in V_2}$ to $V_2$. 
Let $\sigma$ be a bijection from $N$ to $V_2$  and 
define an allocation $Y=(Y_1,\ldots, Y_n)$ to $N$ by $Y_i = X_{\sigma(i)}$ for $i \in N$. 
For $i^*\in V_1$, if $(i^*, \sigma(i^*)) \in E$, then 
$i^*$ is \PROPavg-satisfied by $Y$.
\end{lemma}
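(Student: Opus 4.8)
The plan is to unfold both the edge-membership condition in the \PROPavg-graph and the definition of \PROPavg-satisfaction, and then observe that the former is nothing but a relaxation of the latter in which one non-negative summand has been deliberately dropped. Concretely, I would first fix $u^* = \sigma(i^*)$ and write out what $(i^*, u^*) \in E$ means by definition:
$$
v_{i^*}(X_{u^*})+\frac{1}{n-1}\sum_{u'\in V_2 \setminus \{r, u^*\}} m_{i^*}(X_{u'}) \ge \frac{1}{n}.
$$
This is the only hypothesis available, so the entire argument must start from it.

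Next I would translate every $V_2$-indexed quantity into $N$-indexed quantities through the bijection $\sigma$. Since $Y_i = X_{\sigma(i)}$, we have $X_{u'} = Y_{\sigma^{-1}(u')}$ for each $u' \in V_2$; in particular $v_{i^*}(X_{u^*}) = v_{i^*}(Y_{i^*})$. Writing $j = \sigma^{-1}(r)$, the index set $V_2 \setminus \{r, u^*\}$ corresponds under $\sigma^{-1}$ exactly to $[n] \setminus \{j, i^*\}$, so the edge condition rewrites as
$$
v_{i^*}(Y_{i^*})+\frac{1}{n-1}\sum_{k\in [n] \setminus \{j, i^*\}} m_{i^*}(Y_k) \ge \frac{1}{n}.
$$
Comparing this with the target inequality for \PROPavg-satisfaction, whose sum runs over $k \in [n] \setminus i^*$, the sole difference is the extra term $m_{i^*}(Y_j)$, corresponding to the bundle that receives the specified element $r$.

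Finally, since each valuation is non-negative we have $m_{i^*}(Y_j) \ge 0$, so the \PROPavg sum dominates the edge-condition sum, and chaining the two inequalities yields
$$
v_{i^*}(Y_{i^*})+\frac{1}{n-1}\sum_{k\in [n]\setminus i^*} m_{i^*}(Y_k) \ge v_{i^*}(Y_{i^*})+\frac{1}{n-1}\sum_{k\in [n]\setminus \{j, i^*\}} m_{i^*}(Y_k) \ge \frac{1}{n},
$$
which is precisely the statement that $i^*$ is \PROPavg-satisfied by $Y$. There is no genuinely hard step here; the content is careful bookkeeping of the index sets together with the single observation that the edge condition omits exactly the non-negative term $m_{i^*}(X_r)$. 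The one point deserving care — and the reason the paper flags the omission of $m_i(X_r)$ as \emph{crucial} — is confirming that the dropped index is precisely $j = \sigma^{-1}(r)$, so that discarding it can only weaken the inequality; this is exactly what lets a matched edge certify \PROPavg-satisfaction irrespective of how the bundle $X_r$ is later handled.
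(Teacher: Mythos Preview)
Your proof is correct and follows essentially the same approach as the paper: both set $u^*=\sigma(i^*)$, note that the edge condition is the \PROPavg inequality with the non-negative term $m_{i^*}(X_r)$ omitted, and conclude by adding that term back. The paper compresses this into a single chained inequality, while you spell out the reindexing through $\sigma^{-1}$ more explicitly, but the argument is identical.
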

\begin{proof}
Let $u^* = \sigma(i^*)$ and suppose that $(i^*, u^*) \in E$. 
We directly obtain
\begin{equation*}
v_{i^*}(Y_{i^*})+\frac{1}{n-1}\sum_{j\in [n]\setminus {i^*}} m_{i^*}(Y_j)\ge v_{i^*}(X_{u^*})+\frac{1}{n-1}\sum_{u\in V_2 \setminus \{u^*, r\}} m_{i^*}(X_u) \ge \frac{1}{n},
\end{equation*}
where the first inequality follows from the definition of $Y$ and $m_{i^*}(X_r) \ge 0$, and the second inequality follows from $(i^*, u^*)\in E$.
\end{proof}
\begin{figure}[tbp]
  \begin{center}
   \includegraphics[scale=0.5]{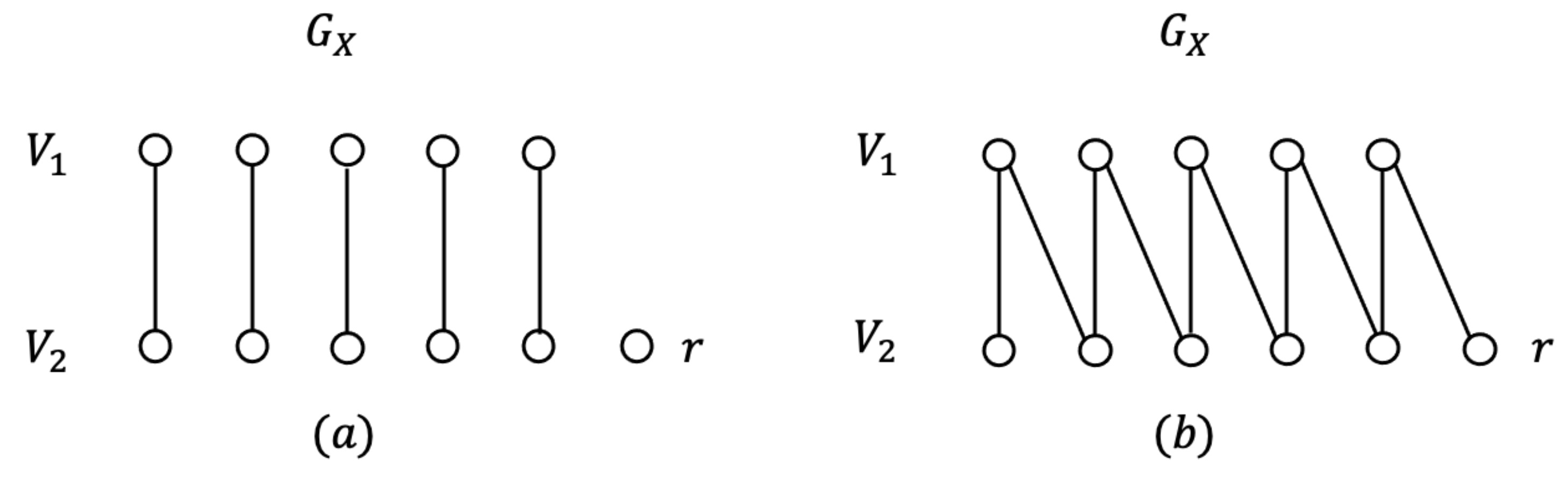}
 \end{center}
  \caption{An example of $(a)$ represents a \PROPavg-graph $G_X$ corresponding to $X$ that satisfies (P1) but does not satisfy (P2). An example of $(b)$ represents a \PROPavg-graph $G_X$ corresponding to $X$ that satisfies (P2).}
  \label{fig: 2}
\end{figure}

For an allocation $X=(X_u)_{u\in V_2}$ to $V_2$, 
we consider the following two properties, which play an important role in our argument.
\begin{description}
\item[(P1)] $G_X- r$ has a perfect matching.
\item[(P2)] For any $u \in V_2$, $G_X-u$ has a perfect matching.
\end{description}
Obviously, property (P2) is stronger than property (P1). 
Roughly speaking, our goal is to find an allocation $X=(X_u)_{u\in V_2}$ to $V_2$ satisfying (P2).
As shown later, we find an allocation $X=(X_u)_{u\in V_2}$ to $V_2$ satisfying (P2) while keeping (P1).
Examples of a \PROPavg-graph $G_X$ corresponding to $X$ satisfying (P1) or (P2) are described in Figure~\ref{fig: 2}.
We can rephrase these conditions by using the following classical theorem known as Hall's marriage theorem in discrete mathematics.
\begin{theorem}[Hall's marriage theorem~\cite{hall1935representatives}]
Suppose that $G=(A, B; E)$ is a bipartite graph with $|A|=|B|$. 
Then, $G$ has a perfect matching if and only if $|S| \le |\Gamma_G(S)|$ for any $S\subseteq A$.
\end{theorem}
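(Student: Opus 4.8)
The final statement is the classical Hall's marriage theorem, which the paper invokes as a cited tool rather than an original contribution, so the plan is to recall the standard argument rather than to invent a new one. The necessity direction is immediate: if $G$ has a perfect matching, then restricting that matching to the vertices of any $S \subseteq A$ assigns to each $s \in S$ a distinct partner lying in $\Gamma_G(S)$, exhibiting an injection $S \hookrightarrow \Gamma_G(S)$ and hence $|S| \le |\Gamma_G(S)|$. All the work is in the converse, so I would assume Hall's condition $|S| \le |\Gamma_G(S)|$ for every $S \subseteq A$ and construct a perfect matching.

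For the sufficiency direction I would argue by induction on $n := |A| = |B|$. The base case $n = 1$ is trivial, since Hall's condition applied to the single vertex of $A$ forces it to have a neighbor. For the inductive step I would split into two cases according to whether Hall's condition is everywhere strict or is tight on some set. In the \emph{slack} case, where $|\Gamma_G(S)| \ge |S| + 1$ for every nonempty proper $S \subsetneq A$, I would pick an arbitrary $a \in A$, match it to any neighbor $b$, and delete both vertices. In the remaining graph $G' = G - a - b$, every $S \subseteq A \setminus a$ loses at most the one vertex $b$ from its neighborhood, so $|\Gamma_{G'}(S)| \ge |\Gamma_G(S)| - 1 \ge |S|$; Hall's condition is preserved, and the induction hypothesis supplies a perfect matching of $G'$ that together with the edge $(a,b)$ covers all of $G$.

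The \emph{tight} case, where some nonempty proper $S_0 \subsetneq A$ satisfies $|\Gamma_G(S_0)| = |S_0|$, is the step I expect to be the main obstacle. Here I would split $G$ into the subgraph $H_1$ induced on $S_0 \cup \Gamma_G(S_0)$ and the subgraph $H_2$ induced on $(A \setminus S_0) \cup (B \setminus \Gamma_G(S_0))$. Hall's condition for $H_1$ is inherited verbatim from $G$, so induction matches $S_0$ perfectly into $\Gamma_G(S_0)$. The crux is verifying Hall's condition for $H_2$: for any $T \subseteq A \setminus S_0$ the neighbors of $T$ inside $B \setminus \Gamma_G(S_0)$ are exactly $\Gamma_G(S_0 \cup T) \setminus \Gamma_G(S_0)$, whence $|\Gamma_{H_2}(T)| = |\Gamma_G(S_0 \cup T)| - |\Gamma_G(S_0)| \ge |S_0 \cup T| - |S_0| = |T|$, using $\Gamma_G(S_0) \subseteq \Gamma_G(S_0 \cup T)$, the disjointness of $S_0$ and $T$, and Hall's condition in $G$ applied to $S_0 \cup T$.

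Induction then matches $A \setminus S_0$ into $B \setminus \Gamma_G(S_0)$, and since $H_1$ and $H_2$ are vertex-disjoint the union of the two matchings is a perfect matching of $G$, closing the induction. The only genuinely delicate point is the neighborhood count in the tight case, so in a full write-up that is where I would spend the most care; the two cases together exhaust all possibilities because either some proper nonempty subset is tight or none is, and the latter is precisely the slack case handled above.
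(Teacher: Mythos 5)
Your proof is correct: it is the classical Halmos--Vaughan induction argument, and the key computation in the tight case ($|\Gamma_{H_2}(T)| = |\Gamma_G(S_0 \cup T)| - |\Gamma_G(S_0)| \ge |T|$) is handled properly. Note, however, that the paper does not prove this statement at all --- it is Hall's marriage theorem, quoted with a citation to Hall's 1935 paper and used as a black box --- so there is no in-paper argument to compare against; your write-up is simply a correct standalone proof of the cited classical result.
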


The property (P1) is equivalent to $|S| \le |\Gamma_{G_{X}- r}(S)|$ for any $S\subseteq V_1$ by this theorem.
The property (P2) is equivalent to $|S| \le |\Gamma_{G_{X}- u}(S)|$ for any $u\in V_2$ and $S\subseteq V_1$ by Hall's marriage theorem.
By simple observation, we can obtain another characterization of property (P2).
\begin{lemma}\label{ob: p2}
Let $X=(X_u)_{u\in V_2}$ be an allocation to $V_2$.
Then, 
$X$ satisfies (P2) if and only if $|S|+1 \le |\Gamma_{G_{X}}(S)|$ for any non-empty subset $S\subseteq V_1$.
\end{lemma}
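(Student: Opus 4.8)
The plan is to prove the equivalence in Lemma~\ref{ob: p2} by leveraging Hall's marriage theorem together with the characterization of (P2) already recorded in the excerpt, namely that (P2) holds if and only if $|S| \le |\Gamma_{G_X - u}(S)|$ for every $u \in V_2$ and every $S \subseteq V_1$. The key observation is that deleting a single vertex $u$ from $V_2$ can decrease the neighborhood of $S$ (computed in the full graph $G_X$) by at most one, and it decreases it by exactly one precisely when $u \in \Gamma_{G_X}(S)$. So I would translate the ``for all $u$'' quantifier in the (P2) condition into a single worst-case statement about $G_X$ itself.

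First I would prove the forward direction. Assume (P2) holds, and let $S \subseteq V_1$ be non-empty. Since $S \neq \emptyset$, its neighborhood $\Gamma_{G_X}(S)$ is also non-empty (indeed the Hall-type conditions force neighbors to exist), so I may pick some $u \in \Gamma_{G_X}(S)$. Deleting this particular $u$ gives $\Gamma_{G_X - u}(S) = \Gamma_{G_X}(S) \setminus u$, so $|\Gamma_{G_X - u}(S)| = |\Gamma_{G_X}(S)| - 1$. Applying the (P2) characterization with this $u$ yields $|S| \le |\Gamma_{G_X}(S)| - 1$, i.e. $|S| + 1 \le |\Gamma_{G_X}(S)|$, which is exactly the desired condition.

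Next I would prove the converse. Assume $|S| + 1 \le |\Gamma_{G_X}(S)|$ for every non-empty $S \subseteq V_1$, and fix an arbitrary $u \in V_2$; I must verify the Hall condition for $G_X - u$. Take any $S \subseteq V_1$. If $S = \emptyset$ the inequality $|S| \le |\Gamma_{G_X - u}(S)|$ is trivial, so assume $S \neq \emptyset$. Deleting $u$ removes at most one vertex from the neighborhood, so $|\Gamma_{G_X - u}(S)| \ge |\Gamma_{G_X}(S)| - 1 \ge (|S| + 1) - 1 = |S|$. Since this holds for all $S \subseteq V_1$ and all $u \in V_2$, the previously stated characterization gives (P2).

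I do not anticipate a serious obstacle here, since the statement is essentially a bookkeeping reformulation; the only point requiring a little care is the forward direction, where I must justify that $\Gamma_{G_X}(S)$ is non-empty so that a vertex $u$ achieving the worst case can actually be chosen. This follows because, under (P1) which is maintained throughout (and which (P2) implies via $G_X - r$), every non-empty $S \subseteq V_1$ has at least $|S| \ge 1$ neighbors even after deleting a vertex, hence at least one neighbor in $G_X$ itself. With that small verification in place, the two set-size inequalities close the equivalence.
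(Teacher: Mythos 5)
Your proposal is correct and follows essentially the same route as the paper: both reduce (P2) via Hall's marriage theorem to the condition $|S| \le |\Gamma_{G_X - u}(S)|$ for all $u$ and $S$, and then trade the ``for all $u$'' quantifier for the $+1$ slack by noting that deleting one vertex of $V_2$ changes the neighborhood of $S$ by at most one, and by exactly one when $u \in \Gamma_{G_X}(S)$. The only cosmetic difference is how you justify $\Gamma_{G_X}(S) \neq \emptyset$ in the forward direction (you invoke (P1) via $G_X - r$, the paper reads it off directly from the Hall condition $|\Gamma_{G_X - u}(S)| \ge |S| \ge 1$); both are valid.
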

\begin{proof}
By Hall's marriage theorem, it is sufficient to show that the following two conditions are  equivalent: 
\begin{enumerate}
\item[(i)] $|S| \le |\Gamma_{G_{X}- u}(S)|$ for any $u\in V_2$ and $S\subseteq V_1$, and
\item[(ii)] $|S|+1 \le |\Gamma_{G_{X}}(S)|$ for any non-empty subset $S\subseteq V_1$.
\end{enumerate}

Suppose that (i) holds. Let $S$ be a nonempty subset of $V_1$. 
Since (i) implies that $|\Gamma_{G_{X}}(S)| \ge |S| \ge 1$, we obtain $\Gamma_{G_{X}}(S) \not= \emptyset$. 
Let $u \in \Gamma_{G_{X}}(S)$. 
By (i) again, we obtain $|\Gamma_{G_{X}}(S)| = |\Gamma_{G_{X} -u}(S)| + 1 \ge |S| + 1$. 
This shows (ii). 

Conversely, suppose that (ii) holds. Let $u \in V_2$ and let $S \subseteq V_1$. 
If $S=\emptyset$, then it clearly holds that $|S| \le |\Gamma_{G_{X}- u}(S)|$.
If $S \neq \emptyset$, then 
we have $|S|+1 \le |\Gamma_{G_{X}}(S)|\le|\Gamma_{G_X-u}(S)|+1$, which implies that $|S| \le |\Gamma_{G_X-u}(S)|$.
This shows (i). 
\end{proof}

\section{Existence of a \PROPavg Allocation}
\label{sec:mainproof}
In this section, we show that there always exists a \PROPavg allocation by constructing a pseudo-polynomial time algorithm for finding it.
Improvements in time complexity are discussed in Section~\ref{sec:poly}.
Our algorithm begins with obtaining an initial allocation ${X}=(X_u)_{u \in V_2}$ to $V_2$ satisfying (P1). 
Unless $X$ satisfies (P2), we appropriately choose a good in $\bigcup_{u \in V_2 \setminus r} X_u$ and move it to $X_r$ while keeping  (P1).
Finally, we get an allocation ${X^*}=(X^*_u)_{u \in V_2}$ to $V_2$ satisfying (P2).
As we will see later, we can obtain a \PROPavg allocation to $N$ from $X^*$. 

\subsection{Our Algorithm}

In order to obtain an initial allocation ${X}=(X_u)_{u \in V_2}$ to $V_2$ satisfying (P1), we use the following previous result about {\it \EFX-with-charity}.
\label{sec: propavg}
\begin{theorem} [Chaudhury et al.~\cite{chaudhury2021little}]\label{thm: charity}
For normalized and monotone valuations, there always exists an allocation $X=(X_1,\ldots, X_n)$ of $M \setminus U$ to $N$, where $U$ is a set of unallocated goods, such that 
\begin{itemize}
\item X is \EFX, that is, $v_i(X_i)+m_i(X_j) \ge v_i(X_j)$ for any pair of agents $i,j \in N$, 
\item $v_i(X_i) \ge v_i(U)$ for any agent $i \in N$, and 
\item $|U| < n$.
\end{itemize}
\end{theorem}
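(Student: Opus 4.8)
The plan is an iterative potential-improvement argument on \emph{partial} \EFX allocations. Throughout I would maintain a partial allocation $X=(X_1,\dots,X_n)$ with pool $P=M\setminus\bigcup_i X_i$ under two invariants: (i) $X$ is \EFX on the allocated bundles, and (ii) no agent envies the pool, i.e.\ $v_i(X_i)\ge v_i(P)$ for every $i$. I attach to each such $X$ a potential $\phi(X)$ equal to the number of allocated goods $m-|P|$, refined lexicographically by the sorted vector of bundle values to break ties, and I repeatedly apply local moves that strictly increase $\phi$ while preserving (i)--(ii). Since $\phi$ is bounded, the process terminates, and the whole task reduces to showing that a configuration from which no improving move exists must have $|P|<n$.

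First I would record one easy move that anchors the argument. If the pool is nonempty and some agent $i$ has $X_i=\emptyset$, then invariant (ii) forces $v_i(P)=0$, so $i$ values every pool good at $0$; handing any pool good to $i$ yields a singleton bundle, which no agent can strongly envy, so (i) and (ii) are preserved while $\phi$ rises. Hence at a terminal configuration with $P\neq\emptyset$ every bundle is nonempty, and I may assume for contradiction that $|P|\ge n$.

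The crux is a \emph{champion} lemma. For a pool good $g$ and an agent $j$, I would look at the augmented bundle $X_j\cup g$ and extract a minimal subset $Z\subseteq X_j\cup g$ with $g\in Z$ that is \EFX-feasible, meaning that no agent strongly envies $Z$ (where $k$ strongly envies a set $S$ if $v_k(S\setminus g')>v_k(X_k)$ for some $g'\in S$). Calling an agent $k$ a \emph{$g$-champion of $j$} when it strictly prefers such a $Z$ to its own bundle, the structural heart of the proof is that champions always exist, so each pool good induces a directed edge $j\to k$ in an auxiliary champion graph on the agent set.

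With at least $n$ pool goods feeding edges into a champion graph on only $n$ agents, a counting argument produces a cyclic champion structure; rotating the allocation along it, so that each agent on the cycle simultaneously seizes the \EFX-feasible subset it champions and the displaced goods return to the pool, yields a partial allocation that still satisfies (i) and (ii) but has strictly larger $\phi$, contradicting terminality and forcing $|P|<n$. The hard part will be exactly this champion lemma together with the verification that the cyclic rotation preserves both \EFX and the no-envy-of-pool invariant while strictly increasing the potential: one must control how replacing each rotated bundle by a minimal \EFX-feasible subset changes every agent's strong envy, and arrange the definitions of $\phi$ and of \EFX-feasibility so that each rotation is provably a net gain. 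Making these two requirements interlock is the delicate core of the argument.
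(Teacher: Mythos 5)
This statement is not proved in the paper at all: it is quoted from Chaudhury, Kavitha, Mehlhorn, and Sgouritsa~\cite{chaudhury2021little} and used as a black box in Lemma~\ref{lem: initial}, so there is no internal proof to compare against. Your outline does follow the general shape of the argument in that reference --- partial \EFX allocations, a no-envy-toward-the-pool condition, most-envious-agent/champion machinery, and an improvement step along a champion cycle --- so you have identified the right road map.

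As a proof, however, it has a genuine gap, and two of your concrete choices would fail as stated. First, the entire technical content --- that every pair $(g,j)$ with $g$ in the pool admits a champion, that when $|P|\ge n$ the champion edges can be combined into a usable cycle, and that rotating along that cycle preserves \EFX --- is asserted rather than proved; you acknowledge this, but that is precisely where all the difficulty lives, so nothing is actually established. Second, your primary potential, the number of allocated goods, is the wrong monovariant: the rotation you describe replaces each rotated bundle by a \emph{subset} of $X_j\cup g$ and sends the displaced goods back to the pool, so an improving step can strictly \emph{decrease} the number of allocated goods, and a lexicographic tie-break cannot rescue a strictly decreasing leading coordinate. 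The argument in~\cite{chaudhury2021little} instead works with a partial \EFX allocation that is maximal under Pareto domination of the utility vector $(v_1(X_1),\dots,v_n(X_n))$ --- each champion strictly prefers the subset it seizes, so a rotation is a strict Pareto improvement --- and derives both $v_i(X_i)\ge v_i(U)$ and $|U|<n$ as consequences of maximality rather than maintaining the former as a step-by-step invariant. Third, and relatedly, preserving ``no agent envies the pool'' across a rotation is itself nontrivial precisely because the pool grows; treating it as an invariant to be carried through every move adds a proof obligation you have not discharged. So the proposal is a plausible sketch of the known proof, but the champion lemma, the cycle extraction, and a workable potential all remain to be supplied.
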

The following lemma shows that by applying Theorem~\ref{thm: charity} to agents $N \setminus n$, we can obtain an initial allocation ${X}=(X_u)_{u \in V_2}$ to $V_2$ satisfying (P1).
\begin{lemma}\label{lem: initial}
There exists an allocation ${X}=(X_u)_{u \in V_2}$ to $V_2$ satisfying (P1).
\end{lemma}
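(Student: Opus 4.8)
```latex
The plan is to apply Theorem~\ref{thm: charity} to the $n-1$ agents in $N \setminus n$ (i.e.\ $V_1$) and then convert the resulting \EFX-with-charity allocation into an allocation to $V_2$ that satisfies (P1). First I would invoke Theorem~\ref{thm: charity} with the agent set $V_1$ of size $n-1$: this yields a partition of $M$ into bundles $X_1,\ldots,X_{n-1}$ (one per agent in $V_1$) together with a leftover set $U$ of unallocated goods, where the allocation is \EFX among the agents in $V_1$, each agent values her own bundle at least as much as $U$, and $|U| < n-1$. I would then build the candidate allocation $X=(X_u)_{u \in V_2}$ to $V_2$ by identifying the elements of $V_2 \setminus r$ with the $n-1$ agents of $V_1$ (so $X_u$ is the bundle that the corresponding agent received) and placing the charity set in the distinguished bundle, i.e.\ setting $X_r = U$. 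This is a genuine partition of $M$ into $n$ parts, hence an allocation to $V_2$.

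The heart of the argument is verifying (P1): that $G_X - r$ has a perfect matching. Since $G_X - r$ is a bipartite graph between $V_1$ and $V_2 \setminus r$, both of size $n-1$, the natural candidate matching is the identity bijection $\phi$ that sends each agent $i \in V_1$ to the vertex $u_i \in V_2 \setminus r$ carrying her own bundle $X_{u_i} = X_i$. So I would show that each pair $(i, u_i)$ is an edge of $G_X$, i.e.\ that
\[
v_i(X_{u_i}) + \frac{1}{n-1}\sum_{u' \in V_2 \setminus \{r, u_i\}} m_i(X_{u'}) \ge \frac{1}{n}.
\]
Here the sum runs over the $n-2$ bundles held by the other agents of $V_1$ (note $X_r = U$ is correctly excluded by the definition of the \PROPavg-graph).

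The key estimate will use \EFX together with the charity bound. From \EFX we have $v_i(X_{u_i}) + m_i(X_{u'}) \ge v_i(X_{u'})$ for every other $u' \in V_2 \setminus \{r, u_i\}$; summing these $n-2$ inequalities and adding the trivial bound $v_i(X_{u_i}) \ge v_i(U) = v_i(X_r)$ gives
\[
(n-1)\,v_i(X_{u_i}) + \sum_{u' \in V_2 \setminus \{r, u_i\}} m_i(X_{u'}) \;\ge\; \sum_{u \in V_2} v_i(X_u) \;=\; v_i(M) \;=\; 1,
\]
since the bundles $(X_u)_{u\in V_2}$ partition $M$. Dividing by $n-1$ yields exactly the edge inequality above, establishing $(i,u_i) \in E$ for all $i$. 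Thus $\phi$ is a perfect matching in $G_X - r$, so (P1) holds.

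The main obstacle I anticipate is bookkeeping the charity set correctly against the \PROPavg-graph's definition: the graph deliberately omits $m_i(X_r)$ from its sum, which is precisely what lets me absorb the charity bundle into the ``$\ge v_i(U)$'' term rather than needing to estimate $m_i(U)$. I would double-check that $|U| < n-1$ is compatible with assigning $U$ as a single (possibly empty) bundle to $r$ — it is, since we only need $U$ to be a subset of $M$, and if $U = \emptyset$ then $X_r = \emptyset$, which causes no difficulty in the summation. No other subtlety arises, so the lemma follows directly from Theorem~\ref{thm: charity}.
```
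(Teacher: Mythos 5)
Your overall strategy is exactly the paper's: apply Theorem~\ref{thm: charity} to the $n-1$ agents in $N\setminus n$, place the charity set $U$ in the distinguished bundle $X_r$, and certify (P1) via the identity matching $(i,u_i)$. However, the central summation as you wrote it does not follow from the inequalities you listed. Summing the $n-2$ \EFX inequalities $v_i(X_{u_i}) + m_i(X_{u'}) \ge v_i(X_{u'})$ over $u' \in V_2\setminus\{r,u_i\}$ and adding the charity bound $v_i(X_{u_i}) \ge v_i(U) = v_i(X_r)$ gives
\[
(n-1)\,v_i(X_{u_i}) + \sum_{u'\in V_2\setminus\{r,u_i\}} m_i(X_{u'}) \;\ge\; \sum_{u\in V_2\setminus\{u_i\}} v_i(X_u) \;=\; 1 - v_i(X_{u_i}),
\]
not $\ge 1$: the right-hand side you obtain is missing the term $v_i(X_{u_i})$, since the bundle $X_{u_i}$ itself never appears on the right of any of the $n-1$ inequalities you summed. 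Consequently dividing by $n-1$ does not yield the edge condition as claimed.

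The repair is precisely what the paper does: add the trivial inequality $v_i(X_{u_i}) \ge v_i(X_{u_i})$ to the list, so that the right-hand side becomes the full sum $\sum_{u\in V_2} v_i(X_u) = v_i(M) = 1$ while the left-hand side picks up coefficient $n$, i.e.
\[
n\, v_i(X_{u_i}) + \sum_{u'\in V_2\setminus\{r,u_i\}} m_i(X_{u'}) \;\ge\; 1 .
\]
Dividing by $n$ gives $v_i(X_{u_i}) + \frac{1}{n}\sum_{u'\in V_2\setminus\{r,u_i\}} m_i(X_{u'}) \ge \frac{1}{n}$, and since each $m_i(X_{u'}) \ge 0$ and $\frac{1}{n-1} \ge \frac{1}{n}$, the edge condition $(i,u_i)\in E$ follows. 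With this one-line correction your argument coincides with the paper's proof; everything else (the construction of $X$, the role of excluding $m_i(X_r)$ from the graph's sum, and the observation that $U$ may be empty) is handled correctly.
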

\begin{proof}
By applying Theorem~\ref{thm: charity} to agents  $N \setminus n$, we can obtain an allocation $Y =(Y_1,\ldots, Y_{n-1})$ of $M \setminus U$ to $N \setminus n$, where $U$ is a set of unallocated goods, 
satisfying the conditions in Theorem~\ref{thm: charity}.
Let $V_2 = \{r, u_1, \dots , u_{n-1}\}$ and 
define  an allocation $X =(X_u)_{u \in V_2}$ to $V_2$ as $X_{u_j} =Y_j$ for $j \in [n-1]$ and $X_r = U$. 
Let $G_X=(V_1, V_2; E)$ be the \PROPavg-graph for $X$.
We show that $X$ satisfies (P1).

Fix any agent $i \in V_1$.
We have $v_i(X_{u_i})+m_i(X_{u_j}) \ge v_i(X_{u_j})$ for any $j\in [n-1]\setminus i$ since $Y$ is \EFX and $X_{u_j} =Y_j$.
We also have $v_i(X_{u_i}) = v_i(Y_i) \ge v_i(U) = v_i(X_r)$ and a trivial inequality $v_i(X_{u_i}) \ge v _i(X_{u_i})$.
By summing up these inequalities, we obtain $n \cdot v_i(X_{u_i}) + \sum_{j\in [n-1]\setminus i} m_i(X_{u_j}) \ge \sum_{u \in V_2} v_i (X_u)$. 
Since $\sum_{u \in V_2} v_i (X_u) = v_i (M) = 1$,  this shows that $v_i(X_{u_i})+\frac{1}{n}\sum_{j\in [n-1]\setminus i} m_i(X_{u_j}) \ge \frac{1}{n}$, and hence $(i, u_i)\in E$.
Therefore,  $G_{X} - r$ has a perfect matching $\{ (i, u_i) \mid i \in [n-1]\}$, which implies that $X$ satisfies (P1). 
\end{proof}
The following lemma shows that if we obtain an allocation ${X}=(X_u)_{u \in V_2}$ to $V_2$ satisfying (P2), then there exists a \PROPavg allocation to $N$. 
\begin{lemma}\label{lem: termination}
Suppose that ${X}=(X_u)_{u \in V_2}$ is an allocation to $V_2$ satisfying (P2). 
Then, we can construct a \PROPavg allocation to $N$.
\end{lemma}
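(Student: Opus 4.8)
The plan is to implement the cut-and-choose protocol sketched in Section~\ref{sec: tech}, with agent $n$ as the chooser, and to use (P2) precisely to cope with the fact that we cannot predict in advance which bundle she will select. First I would let agent $n$ pick a bundle she values most: set $u^* \in \argmax_{u \in V_2} v_n(X_u)$. Since the $n$ quantities $v_n(X_u)$ over $u \in V_2$ are nonnegative and sum to $v_n(M)=1$, an averaging (pigeonhole) argument gives $v_n(X_{u^*}) \ge \frac{1}{n}$.

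Next I would invoke (P2) with removed vertex $u^*$: by hypothesis $G_X - u^*$ has a perfect matching, and because $|V_1| = n-1 = |V_2 \setminus u^*|$, this matching is a bijection $\phi \colon V_1 \to V_2 \setminus u^*$ with $(i, \phi(i)) \in E$ for every $i \in V_1$. Combining the two, I define a bijection $\sigma \colon N \to V_2$ by $\sigma(i) = \phi(i)$ for $i \in N \setminus n$ and $\sigma(n) = u^*$, and set $Y_i = X_{\sigma(i)}$; this $Y$ is an allocation of $M$ to $N$. Finally I would check \PROPavg-satisfaction agent by agent. For each $i \in V_1$ we have $(i, \sigma(i)) \in E$, so Lemma~\ref{ob: v1} immediately certifies that $i$ is \PROPavg-satisfied by $Y$. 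For agent $n$, since $v_n(Y_n) = v_n(X_{u^*}) \ge \frac{1}{n}$ and the correction term $\frac{1}{n-1}\sum_{k \in [n]\setminus n} m_n(Y_k)$ is nonnegative, agent $n$ is \PROPavg-satisfied as well. Hence $Y$ is \PROPavg, and it is the desired allocation.

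The conceptual heart of the argument — and the only place that really requires thought — is why (P2), rather than merely (P1), is the correct hypothesis: because the chooser's valuation $v_n$ is arbitrary, the selected $u^*$ could be \emph{any} element of $V_2$, so a perfect matching of $G_X - u$ must be available for whichever $u$ ends up chosen, which is exactly what (P2) guarantees. I expect the remaining subtlety to be routine bookkeeping around the distinguished vertex $r$: the edge relation $E$ deliberately omits $m_i(X_r)$ from its defining sum, but Lemma~\ref{ob: v1} has already absorbed this by discarding the nonnegative quantity $m_{i^*}(X_r)$, so no separate case analysis on whether $u^* = r$ is needed, and nothing in the construction forces $u^* \ne r$.
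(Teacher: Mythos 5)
Your proposal is correct and follows essentially the same route as the paper's own proof: agent $n$ takes an arbitrary most-valued bundle $X_{u^*}$ (which is worth at least $\frac{1}{n}$ by averaging), (P2) supplies a perfect matching in $G_X - u^*$ assigning the remaining bundles to $V_1$, and Lemma~\ref{ob: v1} certifies \PROPavg-satisfaction for each $i \in V_1$. Your closing observation that no case analysis on $u^* = r$ is needed is also accurate, since Lemma~\ref{ob: v1} already discards the nonnegative term $m_{i^*}(X_r)$.
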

\begin{proof}
Let ${X}=(X_u)_{u \in V_2}$ be an allocation to $V_2$ satisfying (P2).
First, agent $n$ chooses the best bundle $X_{u^*}$ for her valuation among $\{X_u \mid u \in V_2 \}$ (if there is more than one such bundle, choose one arbitrarily).
Since $X$ satisfies (P2), there exists a perfect matching $A$ in $G_X - {u^*}$. 
For each agent $i\in V_1 (= N \setminus n)$, the bundle that matches $i$ in $A$ is allocated to $i$.
By Lemma~\ref{ob: v1}, agent $i$ is \PROPavg-satisfied for each agent $i\in V_1$.
Furthermore, since we have $v_n(X_{u^*}) = \max_{u\in V_2} v_n(X_u) \ge \frac{1}{n}$, agent $n$ is also \PROPavg-satisfied.
Therefore, the obtained allocation is a \PROPavg allocation to $N$.
\end{proof}

The following proposition shows how we update an allocation in each iteration, whose proof is given in Section~\ref{sec: update}. 

\begin{proposition}\label{thm: update}
Suppose that $X=(X_u)_{u \in V_2}$ is an allocation to $V_2$ that satisfies (P1) but does not satisfy (P2).
Then, there exists another allocation $X'=(X'_u)_{u \in V_2}$ to $V_2$ satisfying (P1) such that $|X'_{r}|=|X_{r}|+1$. 
\end{proposition}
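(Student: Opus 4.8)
The plan is to locate a single good whose transfer into $X_r$ is guaranteed to preserve (P1), and to use the obstruction to (P2) both to certify that such a good exists and to localize where it must come from.

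First I would extract the combinatorial obstruction. Since $X$ satisfies (P1) but not (P2), Lemma~\ref{ob: p2} yields a non-empty $S\subseteq V_1$ with $|\Gamma_{G_X}(S)|\le|S|$, while (P1) together with Hall's theorem gives $|\Gamma_{G_X-r}(S)|\ge|S|$. As $\Gamma_{G_X-r}(S)=\Gamma_{G_X}(S)\setminus\{r\}$, combining these forces $|\Gamma_{G_X}(S)|=|S|$ and $r\notin\Gamma_{G_X}(S)$. Because the neighborhood function of a bipartite graph is submodular, the union of two such tight sets is again tight, so I may take $S$ to be the unique maximal tight set; writing $T=\Gamma_{G_X}(S)$, we then have $|T|=|S|$, $r\notin T$, and any perfect matching $M_0$ of $G_X-r$ matches $S$ bijectively onto $T$. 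Next I would record how the \PROPavg-graph reacts to a transfer: moving one good $g$ from a bundle $X_u$ with $u\ne r$ into $X_r$ alters only $X_u$ (which loses $g$) and $X_r$. Inspecting the edge inequality shows that for every target $u'\notin\{u,r\}$ the value $v_i(X_{u'})$ is unchanged while the included minimum $m_i(X_u)$ can only grow, so the left-hand side does not decrease; the same holds at the target $r$; and only at the target $u$, where $v_i(X_u)$ drops and $m_i(X_u)$ is omitted from the sum, can the left-hand side fall. Hence \emph{the only edges that can disappear are those incident to $u$}, and all edges into $V_2\setminus\{u,r\}$ are preserved (possibly augmented). Since (P1) lives in $G_X-r$, it can only be endangered through the loss of edges at $u$, so it suffices to keep one perfect matching alive.

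I would then argue the transfer is possible and confine it to $T$. If every $X_u$ with $u\in T$ were empty, then for an edge $(i,u)\in E$ with $i\in S$ the averaged-minimum term alone would already reach $\tfrac1n$; but that same quantity certifies $(i,r)\in E$, contradicting $r\notin T$. Hence some $X_u$ with $u\in T$ is non-empty and a good is available to move. The candidate I would analyze is to delete from such a bundle $X_u$ the good $g$ that its matched partner $i_0=M_0^{-1}(u)\in S$ values least (selecting $u\in T$ by a global minimality rule over the bundles indexed by $T$) and to set $X'_r=X_r\cup g$, which raises $|X_r|$ by exactly one.

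The hard part will be verifying that (P1) survives this deletion for general $n$, and this is where I expect the main obstacle. If the partner edge $(i_0,u)$ is retained, then $M_0$ itself remains a perfect matching of $G_{X'}-r$ and we are done. However, because the thresholds involve the $\tfrac1{n-1}$-averaged minima, removing a single good perturbs all edges at $u$ simultaneously, and the partner edge may well break, so a direct preservation argument does not suffice. The delicate step is to show that $G_{X'}-r$ still has a perfect matching even then: by the monotonicity observation, any Hall-violating set $W$ in $G_{X'}-r$ would have to be tight in $G_X-r$, have $u$ in its neighborhood, lose \emph{all} of its edges to $u$, and acquire no new neighbor at once. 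I would rule this out by an augmenting-path/re-routing argument anchored at the maximal tight set $S$, exploiting the non-adjacency of $S$ to $r$ to bound how valuable the deleted good can be and hence how many edges it can destroy. Establishing that no such $W$ can exist is the technical heart of the proposition; once it is in place, checking that $X'$ is again an allocation to $V_2$ with $|X'_r|=|X_r|+1$ satisfying (P1) is routine.
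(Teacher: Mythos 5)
There is a genuine gap, and you have in fact pointed at it yourself: the step you defer as ``the technical heart of the proposition'' is precisely the part that carries all the difficulty, and your sketch of how to close it does not work as stated. Your correct observations --- that the obstruction to (P2) yields a tight set $S$ with $r\notin\Gamma_{G_X}(S)$, and that moving a good out of $X_{u}$ can only destroy edges incident to $u$ (the paper's Claim~\ref{cl: keepingedges}) --- reduce everything to exhibiting \emph{one} pair $(u^*,g)$ with $g\in X_{u^*}$ such that some edge $(i^*,u^*)$ used in a perfect matching of $G_X-r$ survives the move. You fix a candidate (the least-valued good of the matched partner in some bundle of $T$), concede that the partner edge may break, and then propose to re-route the matching via a Hall-violator analysis in $G_{X'}-r$. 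But if the pair $(u,g)$ is chosen badly, the move genuinely can destroy (P1) --- all edges at $u$ can vanish, dropping $|\Gamma_{G_{X'}-r}(S)|$ below $|S|$ --- so no re-routing argument can succeed uniformly over your candidate set; the existence of a \emph{good} candidate has to be proved, and that is what is missing. The paper's resolution is Lemma~\ref{lem: removable}: for an agent $i$ with $(i,r)\notin E$, assume that for every $u$ with $(i,u)\in E$ the removal of $i$'s least-valued good from $X_u$ kills the edge $(i,u)$ (or $|X_u|=1$); summing the resulting strict inequalities over all $u\in V_2$, both adjacent and non-adjacent to $i$, yields $\sum_{u\in V_2} v_i(X_u)<1$, contradicting $v_i(M)=1$. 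This averaging argument is the key idea your proposal lacks, and it is not recoverable from matching-theoretic considerations alone because it uses the arithmetic of the $\tfrac1{n-1}$-averaged thresholds.

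A secondary issue: you take $S$ to be the \emph{maximal} tight set, whereas the paper takes a \emph{minimal} violating set $S^*$. Minimality is not cosmetic. Lemma~\ref{lem: removable} hands you some $u^*$ adjacent to $i^*\in S^*$, not necessarily the bundle matched to $i^*$ in your fixed matching $M_0$, so one must still show that $G_X-r$ has a perfect matching \emph{through the specific edge} $(i^*,u^*)$; this is Claim~\ref{cl: matching}, whose proof needs the strengthened Hall condition $|S|+1\le|\Gamma_{G_X-r}(S)|$ on proper subsets $S\subsetneq S^*$, which only minimality provides. For a maximal tight set it is false in general that every edge from $S$ into $\Gamma_{G_X}(S)$ lies in some perfect matching of $G_X-r$. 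So even after the averaging lemma is supplied, your choice of tight set would need to be changed.
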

As we will see in Section~\ref{sec: update}, an allocation $X'$ in Proposition~\ref{thm: update} is obtained by moving some appropriate item $g \in \bigcup_{u \in V_2 \setminus r} X_u$ to $X_r$.
If Proposition~\ref{thm: update} holds, then we can show that there always exists a \PROPavg allocation when each agent has a non-negative additive valuation as follows. See also Algorithm~\ref{alg:01}. 

By Lemma~\ref{lem: initial}, we first obtain an initial allocation $X=(X_u)_{u \in V_2}$ to $V_2$ satisfying (P1). 
By Proposition~\ref{thm: update}, unless $X$ satisfies (P2), we can increase $|X_r|$ by one while keeping the condition (P1).
Since $|X_r|\le |M|$, this procedure terminates in at most $m$ steps, and we finally obtain an allocation ${X}^*$ to $V_2$ satisfying (P2).
Therefore, there exists a \PROPavg allocation to $N$ by Lemma~\ref{lem: termination}.

\begin{algorithm}[htb]
\caption{ Algorithm for finding a \PROPavg allocation}
\label{alg:01}
\begin{algorithmic}[1]
\Require agents $N$, goods $M$, and valuation $v_i$ for each $i \in N$
\Ensure  a \PROPavg allocation to $N$
\State Apply Lemma~\ref{lem: initial} to obtain an allocation $X$ to $V_2$ satisfying (P1). 
\While{$X$ does not satisfy (P2)}
    \State Apply Proposition~\ref{thm: update} to $X$ and obtain another allocation $X'$ to $V_2$. 
    \State $X \leftarrow X'$.
\EndWhile
\State Apply Lemma~\ref{lem: termination} to obtain a \PROPavg allocation to $N$. 
\end{algorithmic}
\end{algorithm}

\subsection{Proof of Proposition~\ref{thm: update}}
\label{sec: update}


Let $X=(X_u)_{u \in V_2}$ be an allocation to $V_2$. 
For $u^* \in V_2\setminus r$ and $g \in X_{u^*}$, 
we say that an allocation $X'=(X'_u)_{u \in V_2}$ to $V_2$ is obtained from $X$ by {\it moving $g$ to $X_r$} 
if \begin{align*}
X'_u &= \left\{ 
\begin{array}{ll}
X_r \cup g & {\rm if}~u=r, \\
X_{u^*}\setminus g & {\rm if}~u=u^*,\\
X_{u} & {\rm otherwise}.
\end{array}
 \right. &
\end{align*}

The following lemma guarantees that if there exists an agent $i \in V_1$ such that $(i,r)\not \in E$ in the \PROPavg-graph $G_X = (V_1, V_2; E)$, then we can move some good in $\bigcup_{u \in V_2 \setminus r} X_u$ to $X_r$ so that the edges incident to $i$ do not disappear.
This lemma is crucial in the proof of Proposition~\ref{thm: update}.

\begin{lemma}\label{lem: removable}
Let $X=(X_u)_{u \in V_2}$ be an allocation to $V_2$ and let $i \in V_1$ be an agent such that $(i,r)\not \in E$ in the \PROPavg-graph $G_X = (V_1, V_2; E)$. 
Then, there exist $u^*\in V_2$ and $g\in X_{u^*}$ such that 
$(i, u^*)\in E$, $|X_{u^*}|\ge 2$, and the following property holds: 
if an allocation $X'$ to $V_2$ is obtained from $X$ by moving $g$ to $X_r$, then 
the corresponding \PROPavg-graph $G_{X'}$ has an edge $(i, u^*)$. 
\end{lemma}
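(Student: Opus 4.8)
The plan is to reduce the three required conclusions to a single scalar inequality about one bundle, and then to prove that inequality by an averaging argument over the non-$r$ bundles, using the hypothesis $(i,r)\notin E$ twice.

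First I would fix the choice of $g$: since we are free to pick $g\in X_{u^*}$, I take it to be a good of minimum value for $i$, so that $v_i(g)=m_i(X_{u^*})$. The point of this choice is that moving $g$ from $X_{u^*}$ to $X_r$ alters only the bundles indexed by $r$ and $u^*$; every bundle $X_{u'}$ with $u'\in V_2\setminus\{r,u^*\}$ is untouched, so each term $m_i(X_{u'})$ in the defining sum of the edge $(i,u^*)$ is unchanged. Thus whether $(i,u^*)$ reappears in $G_{X'}$ depends only on $v_i(X_{u^*})$ dropping by $m_i(X_{u^*})$. Writing, for $u\in V_2\setminus r$,
$$
h(u)\;=\;v_i(X_u)-m_i(X_u)+\frac{1}{n-1}\sum_{u'\in V_2\setminus\{r,u\}} m_i(X_{u'}),
$$
the edge $(i,u^*)$ lies in $G_{X'}$ exactly when $h(u^*)\ge\tfrac1n$. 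Because $m_i(X_{u^*})\ge 0$, this same inequality yields $v_i(X_{u^*})+\tfrac1{n-1}\sum_{u'\in V_2\setminus\{r,u^*\}} m_i(X_{u'})\ge\tfrac1n$, i.e.\ $(i,u^*)\in E$ in $G_X$ as well. So the entire lemma reduces to exhibiting some $u^*\in V_2\setminus r$ with $h(u^*)\ge\tfrac1n$ and $|X_{u^*}|\ge 2$.

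Next I would obtain the inequality $h(u^*)>\tfrac1n$ by averaging $h$ over $V_2\setminus r$. Using additivity (so $\sum_{u\in V_2} v_i(X_u)=1$) and interchanging the order of the double summation — in which each $m_i(X_{u'})$ with $u'\neq r$ is counted exactly $n-2$ times — a direct computation collapses the average to $\sum_{u\in V_2\setminus r} h(u)=1-\bigl(v_i(X_r)+\tfrac1{n-1}\sum_{u\in V_2\setminus r} m_i(X_u)\bigr)$. The quantity in parentheses is precisely the left-hand side of the edge condition for $(i,r)$, which is strictly below $\tfrac1n$ since $(i,r)\notin E$. Hence $\sum_{u\in V_2\setminus r} h(u)>\tfrac{n-1}{n}$, and as there are only $n-1$ summands, some $u^*$ satisfies $h(u^*)>\tfrac1n$.

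The step I expect to be the main obstacle is verifying $|X_{u^*}|\ge 2$, because the crude averaging only hands me a bundle with large $h$, which a priori could be a singleton that the lemma forbids. The resolution is again the hypothesis $(i,r)\notin E$, which forces $\tfrac1{n-1}\sum_{u\in V_2\setminus r} m_i(X_u)<\tfrac1n$. For any bundle with $|X_u|\le 1$ one has $v_i(X_u)=m_i(X_u)$, so $h(u)=\tfrac1{n-1}\sum_{u'\in V_2\setminus\{r,u\}} m_i(X_{u'})\le\tfrac1{n-1}\sum_{u'\in V_2\setminus r} m_i(X_{u'})<\tfrac1n$. Therefore every empty or singleton bundle has $h<\tfrac1n$, so the $u^*$ produced above is automatically big. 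Combined with the reduction of the first paragraph and the choice of $g$ as a minimum-value good of $X_{u^*}$, this delivers all three conclusions: $(i,u^*)\in E$, $|X_{u^*}|\ge 2$, and the persistence of $(i,u^*)$ in $G_{X'}$.
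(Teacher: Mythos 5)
Your proof is correct; every step checks out, including the exact identity $\sum_{u\in V_2\setminus r} h(u)=1-v_i(X_r)-\frac{1}{n-1}\sum_{u\in V_2\setminus r} m_i(X_u)$ and the observation that bundles with at most one good satisfy $h(u)<\frac{1}{n}$. The paper runs the same underlying computation, but contrapositively: it assumes no valid pair $(u^*,g)$ exists, derives for each $u$ with $(i,u)\in E$ essentially the inequality $h(u)<\frac{1}{n}$ (its Claim 7, where the sub-case $|X_u|=1$ is eliminated by a separate nested contradiction with $(i,r)\notin E$), derives for each $u$ with $(i,u)\notin E$ the failed edge condition, and sums all $n$ inequalities so that the $m_i$-terms cancel and $v_i(M)<1$ results. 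Your reorganization into a direct averaging argument buys two simplifications: the sum is taken uniformly over all of $V_2\setminus r$, so no case split on whether $(i,u)\in E$ is needed (and no preliminary step showing that bundles adjacent to $i$ are nonempty), and the requirement $|X_{u^*}|\ge 2$ is discharged up front by the clean remark that any empty or singleton bundle has $h(u)\le\frac{1}{n-1}\sum_{u'\in V_2\setminus r} m_i(X_{u'})<\frac{1}{n}$, again using only $(i,r)\notin E$. The paper's version has the mild advantage of not needing the exact evaluation of the double sum (only that the $m_i$-coefficients cancel), but your direct form is shorter and makes the role of the hypothesis $(i,r)\notin E$ more transparent.
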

\begin{proof}
We first show that $X_u \neq\emptyset$ for any $u\in V_2$ with $(i,u)\in E$.
Indeed, if $X_u = \emptyset$, then we have 
$$
v_i(X_r)+\frac{1}{n-1}\sum_{u' \in V_2\setminus r} m_i(X_{u'})\ge v_i(X_u) + \frac{1}{n-1}\sum_{u' \in V_2\setminus \{r, u\}} m_i(X_{u'})\ge \frac{1}{n}, 
$$
where
the first inequality follows from $v_i(X_u)=m_i(X_u)=0$ and the second inequality follows from $(i, u) \in E$.
This contradicts $(i, r)\not \in E$.

To derive a contradiction, assume that 
$u^*$ and $g$ satisfying the conditions in Lemma~\ref{lem: removable} do not exist. 
Then, we have the following claim. 

\begin{claim}\label{cl: ineq}
For any $u\in V_2$ with $(i, u)\in E$, we obtain 
\begin{equation}
v_i(X_u)-m_i(X_u)+\frac{1}{n-1}\sum_{\substack{u'\in V_2\setminus \{r, u\}:\\(i, u')\in E}} m_i(X_{u'}) <\frac{1}{n}. \label{eq: 01}
\end{equation}
\end{claim}
\begin{proof}[Proof of the Claim]
Fix $u\in V_2$ with $(i, u)\in E$. 
Let $g$ be a good in $X_u$ that minimizes $v_i(g)$, where we note that $X_u \not= \emptyset$ as described above. Then, $v_i(g) = m_i(X_u)$. 
Define $X'=(X'_{u'})_{u' \in V_2}$ as the allocation to $V_2$ that is obtained from $X$ by moving $g$ to $X_r$. 
Let $G_{X'}=(V_1, V_2; E')$ be the \PROPavg-graph corresponding to $X'$.
Since $u$ and $g$ do not satisfy the conditions in Lemma~\ref{lem: removable} by our assumption, we have $(i, u) \not\in E'$ or $|X_u| =1$. 

If  $(i, u)\in E'$, then we obtain  $|X_u| = 1$, and hence 
\begin{align*}
v_i(X_{r})+\frac{1}{n-1}\sum_{u'\in V_2\setminus {r}} m_i(X_{u'})  
&\ge \frac{1}{n-1}\sum_{u'\in V_2 \setminus \{r, u\}} m_i(X_{u'})  \\
&= v_i(X'_{u})+\frac{1}{n-1}\sum_{u'\in V_2\setminus \{r, u\}} m_i(X'_{u'}) \\
&\ge \frac{1}{n}, 
\end{align*}
where 
the equality follows from $v_i(X'_{u}) = v_i(\emptyset) = 0$
and the last inequality follows from $(i, u)\in E'$.
This contradicts $(i, r)\not \in E$.

Thus, it holds that $(i, u) \not\in E'$.
Since $v_i(X_u)-m_i(X_u) = v_i(X'_u)$, we obtain 
\begin{align*}
&v_i(X_u)-m_i(X_u)+\frac{1}{n-1}\sum_{\substack{u'\in V_2\setminus \{r, u\}:\\(i, u')\in E}} m_i(X_{u'})  \\
&\le v_i(X_u)-m_i(X_u)+\frac{1}{n-1}\sum_{u'\in V_2\setminus \{r, u\}} m_i(X_{u'}) \\
&= v_i(X'_u)+\frac{1}{n-1}\sum_{u'\in V_2\setminus \{r, u\}} m_i(X'_{u'}) \\
&< \frac{1}{n}, 
\end{align*}
where the last inequality follows from $(i, u) \not\in E'$.  
\end{proof}
By summing up inequality (\ref{eq: 01}) for each $u\in V_2$ with $(i,u)\in E$, we obtain the following inequality: 
\begin{equation}\label{eq: b}
\sum_{\substack{u\in V_2:\\ (i,u)\in E}} v_i(X_{u})+\left(- 1 + \frac{l-1}{n-1}\right)\sum_{\substack{u'\in V_2\setminus r: \\ (i, u') \in E}} m_i(X_{u'}) < \frac{l}{n},
\end{equation}
where $l=|\{u\in V_2\mid (i, u)\in E\}|$. 

On the other hand, for any $u\in V_2$ with $(i,u)\not\in E$, we have 
\begin{equation}\label{eq: d}
v_i(X_{u})+\frac{1}{n-1}\sum_{\substack{u'\in V_2\setminus r: \\ (i,u')\in E}} m_i(X_{u'}) \le 
v_i(X_{u})+\frac{1}{n-1}\sum_{u'\in V_2\setminus \{r, u\}} m_i(X_{u'}) < \frac{1}{n}, 
\end{equation}
where the both inequalities follow from $(i,u)\not\in E$.
Summing up inequality~(\ref{eq: d}) for each $u\in V_2$ with $(i,u)\not\in E$, we obtain 
\begin{equation}\label{eq: e}
\sum_{\substack{u\in V_2:\\ (i,u)\not\in E}} v_i(X_u)+\left(\frac{n-l}{n-1}\right)\sum_{\substack{u'\in V_2\setminus r: \\ (i,u') \in E}} m_i(X_{u'}) < \frac{n-l}{n}, 
\end{equation}
where we note that $|\{u\in V_2\mid (i, u) \not\in E\}| = n-l$. 

By taking the sum of inequalities (\ref{eq: b}) and (\ref{eq: e}), we obtain 
$$
\sum_{\substack{u\in V_2:\\ (i,u)\in E}} v_i(X_{u})+ \sum_{\substack{u\in V_2:\\ (i,u)\not\in E}} v_i(X_{u}) < 1, 
$$
which contradicts $\sum_{u\in V_2} v_i(X_{u}) = 1$. 

Therefore, there exist $u^*\in V_2$ and $g\in X_{u^*}$ satisfying the conditions in Lemma~\ref{lem: removable}.
\end{proof}
We are now ready to prove Proposition~\ref{thm: update}.
\begin{proof}[Proof of Proposition~\ref{thm: update}]
Suppose that $X=(X_u)_{u \in V_2}$ is an allocation to $V_2$ that satisfies (P1) but does not satisfy (P2).
Let $G_X = (V_1, V_2; E)$ be the \PROPavg-graph corresponding to $X$.
Since $X$ does not satisfy (P2), there exists a non-empty set $S \subseteq V_1$ such that $|S|+1 > |\Gamma_{G_X}(S)|$ by Lemma~\ref{ob: p2}. 
Among such sets, let $S^* \subseteq V_1$ be an inclusion-wise minimal one. 
By the integrality of  $|\Gamma_{G_X}(S^*)|$ and $|S^*|$, this means that
$|S^*| \ge |\Gamma_{G_X}(S^*)|$ and $|S|+1 \le |\Gamma_{G_X}(S)|$ for any non-empty proper subset $S \subsetneq S^*$. 
We now show some properties of $S^*$. 

\begin{claim}\label{clm: S*01}
For any $i\in S^*$, it holds that $(i, r) \not\in E$. 
\end{claim}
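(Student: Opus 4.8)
The plan is to argue by contradiction, exploiting the fact that $X$ still satisfies (P1) even though it violates (P2) through the set $S^*$. The two facts I would put side by side are these. First, because $S^*$ is a violating set for the Hall-type condition of Lemma~\ref{ob: p2} (indeed it was chosen so that $|S^*|+1 > |\Gamma_{G_X}(S^*)|$), integrality gives $|\Gamma_{G_X}(S^*)| \le |S^*|$. Second, since $X$ satisfies (P1), the Hall characterization of (P1) recorded after the statement of Hall's marriage theorem yields $|S| \le |\Gamma_{G_X - r}(S)|$ for every $S \subseteq V_1$, and in particular $|\Gamma_{G_X - r}(S^*)| \ge |S^*|$.

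The crux is to track what happens to the neighborhood of $S^*$ when $r$ is deleted. First I would assume, for contradiction, that some $i \in S^*$ satisfies $(i, r) \in E$. Then $r \in \Gamma_{G_X}(S^*)$, whereas $r \notin \Gamma_{G_X - r}(S^*)$ because $r$ is no longer a vertex of $G_X - r$. Deleting $r$ removes exactly the vertex $r$ from the neighborhood and touches no other element of $V_2$, so $\Gamma_{G_X}(S^*) = \Gamma_{G_X - r}(S^*) \cup \{r\}$ is a disjoint union, giving $|\Gamma_{G_X}(S^*)| = |\Gamma_{G_X - r}(S^*)| + 1$. Combining this with the (P1) bound produces $|\Gamma_{G_X}(S^*)| = |\Gamma_{G_X - r}(S^*)| + 1 \ge |S^*| + 1$, which contradicts $|\Gamma_{G_X}(S^*)| \le |S^*|$. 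Hence no $i \in S^*$ can be adjacent to $r$, i.e.\ $(i, r) \notin E$ for every $i \in S^*$, as claimed.

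Since the whole argument is a single counting step, I do not expect any genuine obstacle here; the only points that demand care are invoking the correct direction of the Hall characterization (the one supplied by (P1), applied to $G_X - r$) and correctly accounting for the lone vertex $r$ that leaves the neighborhood upon deleting $r$. I would also remark in passing that minimality of $S^*$ is not actually used in this claim — only that $S^*$ is a violating set — so the statement in fact holds for every such set; the minimality will be needed only in the subsequent claims about $S^*$.
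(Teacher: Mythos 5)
Your proof is correct and is essentially the same counting argument as the paper's: both combine the (P1) bound $|S^*|\le|\Gamma_{G_X-r}(S^*)|$ with the violating-set bound $|\Gamma_{G_X}(S^*)|\le|S^*|$ to conclude $r\notin\Gamma_{G_X}(S^*)$, you by contradiction and the paper by observing the resulting chain of inequalities is tight. Your side remark that minimality of $S^*$ is not needed here also matches the paper's proof.
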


\begin{proof}[Proof of the claim]
Since $X$ satisfies (P1), we have $|S^*| \le |\Gamma_{G_X-r}(S^*)|$ by Hall's marriage theorem.
Hence, we obtain $|S^*| \le |\Gamma_{G_X-r}(S^*)| \le |\Gamma_{G_X}(S^*)| \le |S^*|$, where the last inequality follows from the definition of $S^*$. 
This shows that all the above inequalities are tight. 
Since $|\Gamma_{G_X-r}(S^*)| = |\Gamma_{G_X}(S^*)|$, we obtain $r \not\in \Gamma_{G_X}(S^*)$, that is,  $(i, r) \not\in E$ for any $i\in S^*$.
\end{proof}

\begin{claim}\label{clm: S*02}
For any non-empty proper subset $S \subsetneq S^*$, it holds that $|S|+1\le |\Gamma_{G_X -r} (S)|$.
\end{claim}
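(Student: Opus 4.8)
The plan is to deduce the bound for $G_X - r$ directly from the bound for $G_X$ that we already have, exploiting the fact that $r$ never participates in the neighborhood of any subset of $S^*$. First I would recall that the inclusion-wise minimality of $S^*$ gives, for every non-empty proper subset $S \subsetneq S^*$, the inequality $|S|+1 \le |\Gamma_{G_X}(S)|$. So the only thing left to do is to compare $\Gamma_{G_X - r}(S)$ with $\Gamma_{G_X}(S)$ and argue that deleting $r$ costs nothing.

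In general one only has $\Gamma_{G_X - r}(S) = \Gamma_{G_X}(S) \setminus \{r\}$, and a naive estimate could drop the cardinality by one, yielding merely $|S| \le |\Gamma_{G_X - r}(S)|$, which is exactly one short of the desired $|S|+1$. The step that recovers the missing unit is the observation that $r$ is not a neighbor of $S$ in $G_X$ to begin with: since $S \subseteq S^*$, Claim~\ref{clm: S*01} yields $(i,r) \notin E$ for every $i \in S$, hence $r \notin \Gamma_{G_X}(S)$. Consequently deleting $r$ changes nothing, i.e.\ $\Gamma_{G_X - r}(S) = \Gamma_{G_X}(S)$, and the bound $|S|+1 \le |\Gamma_{G_X}(S)|$ transfers verbatim to $G_X - r$.

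I do not anticipate any real obstacle here; the whole argument rests on having Claim~\ref{clm: S*01} in hand, which guarantees $r \notin \Gamma_{G_X}(S^*)$ and therefore $r \notin \Gamma_{G_X}(S)$ for every $S \subseteq S^*$. The single point requiring care is to distinguish \emph{losing $r$ as a neighbor} (the naive estimate, which would fail) from \emph{$r$ was never a neighbor} (the actual situation), since it is precisely the latter that preserves the extra $+1$ demanded by the claim.
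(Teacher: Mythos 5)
Your argument is correct and coincides with the paper's own proof: both invoke Claim~\ref{clm: S*01} to conclude $r \notin \Gamma_{G_X}(S)$, hence $\Gamma_{G_X-r}(S) = \Gamma_{G_X}(S)$, and then transfer the bound $|S|+1 \le |\Gamma_{G_X}(S)|$ coming from the minimality of $S^*$. Nothing is missing.
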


\begin{proof}[Proof of the claim]
Let $S$ be a non-empty proper subset of $S^*$. 
Since $(i, r) \not\in E$ for any $i\in S \subsetneq S^*$ by Claim~\ref{clm: S*01}, 
we have that $\Gamma_{G_X} (S) = \Gamma_{G_X-r} (S)$. 
Hence, we obtain $|S|+1\le |\Gamma_{G_X} (S)| = |\Gamma_{G_X -r} (S)|$, 
where the inequality follows from the minimality of $S^*$.
\end{proof}

\begin{claim}\label{cl: matching}
For any $i\in S^*$ and $u\in \Gamma_{G_X}(S^*)$ with $(i, u)\in E$, $G_X-r$ has a perfect matching in which $i$ matches $u$.
\end{claim}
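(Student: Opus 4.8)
The plan is to recast the statement ``$(i,u)$ lies in some perfect matching of $G_X-r$'' as an ordinary Hall condition and then verify that condition using the two structural properties of $S^*$ already established. First I would record the basic facts about $u$: since $u\in\Gamma_{G_X}(S^*)$ while $r\notin\Gamma_{G_X}(S^*)$ by Claim~\ref{clm: S*01}, we have $u\in V_2\setminus r$ and $(i,u)\in E$, and moreover $\Gamma_{G_X-r}(S^*)=\Gamma_{G_X}(S^*)$ with $|\Gamma_{G_X-r}(S^*)|=|S^*|$ (the chain of equalities coming out of the proof of Claim~\ref{clm: S*01}). Now $G_X-r$ is a balanced bipartite graph with parts $V_1$ and $V_2\setminus r$, each of size $n-1$. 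A perfect matching of $G_X-r$ using the edge $(i,u)$ exists if and only if the balanced bipartite graph $G_X-r-i-u$ (with parts $V_1\setminus i$ and $V_2\setminus\{r,u\}$) has a perfect matching, so by Hall's marriage theorem it suffices to prove
\[
|\Gamma_{G_X-r}(T)\setminus u|\ge |T|\qquad\text{for every }T\subseteq V_1\setminus i .
\]

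For $T$ with $u\notin\Gamma_{G_X-r}(T)$ this is immediate: the left side equals $|\Gamma_{G_X-r}(T)|$, which is at least $|T|$ by Hall applied to $G_X-r$ (which has a perfect matching by (P1)). Hence the real work is the case $u\in\Gamma_{G_X-r}(T)$, where the left side is $|\Gamma_{G_X-r}(T)|-1$ and I must establish the strengthened bound $|\Gamma_{G_X-r}(T)|\ge |T|+1$.

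To prove that strengthened bound I would compare $T$ with $S^*$ through their union. Using $\Gamma_{G_X-r}(T\cup S^*)=\Gamma_{G_X-r}(T)\cup\Gamma_{G_X-r}(S^*)$, inclusion–exclusion, the Hall lower bound $|\Gamma_{G_X-r}(T\cup S^*)|\ge |T\cup S^*|=|T|+|S^*|-|T\cap S^*|$, and $|\Gamma_{G_X-r}(S^*)|=|S^*|$, one gets
\[
|\Gamma_{G_X-r}(T)| \;=\; |\Gamma_{G_X-r}(T\cup S^*)|-|S^*|+|\Gamma_{G_X-r}(T)\cap\Gamma_{G_X-r}(S^*)| \;\ge\; |T|-|T\cap S^*|+|\Gamma_{G_X-r}(T)\cap\Gamma_{G_X-r}(S^*)| .
\]
Thus the target $|\Gamma_{G_X-r}(T)|\ge |T|+1$ reduces to the single inequality $|\Gamma_{G_X-r}(T)\cap\Gamma_{G_X-r}(S^*)|\ge |T\cap S^*|+1$, which I would verify in two subcases. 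If $T\cap S^*=\emptyset$ the right side is $1$, and the common neighbourhood contains $u$ (by the case hypothesis $u\in\Gamma_{G_X-r}(T)$ together with $u\in\Gamma_{G_X-r}(S^*)$), so the bound holds. If $T\cap S^*\neq\emptyset$, then $T\cap S^*$ is a non-empty \emph{proper} subset of $S^*$ — proper because $i\in S^*$ while $i\notin T$ — so Claim~\ref{clm: S*02} gives $|\Gamma_{G_X-r}(T\cap S^*)|\ge |T\cap S^*|+1$, and since $\Gamma_{G_X-r}(T\cap S^*)\subseteq\Gamma_{G_X-r}(T)\cap\Gamma_{G_X-r}(S^*)$ the bound again follows.

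The hard part will be exactly this last inequality on the common neighbourhood: it is the point at which the minimality of $S^*$ (through Claim~\ref{clm: S*02}) and the hypothesis that $u$ is a common neighbour of $T$ and $S^*$ have to be combined, and it requires care that $i\in S^*\setminus T$ so that $T\cap S^*$ is genuinely a proper subset of $S^*$ and Claim~\ref{clm: S*02} applies. Everything else — the matching-through-an-edge reduction, the trivial subcase, and the inclusion–exclusion bookkeeping — is routine once this inequality is in hand.
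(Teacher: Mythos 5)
Your proof is correct, but it takes a genuinely different route from the paper's. The paper builds the required matching constructively in three pieces: it starts from a perfect matching $A$ of $G_X-r$ (which exists by (P1)), uses the tightness $|S^*|=|\Gamma_{G_X-r}(S^*)|$ to argue that $A$ matches $S^*$ exactly onto $\Gamma_{G_X-r}(S^*)$, so that deleting those edges leaves a matching $A_1$ covering $V_1\setminus S^*$ and $V_2\setminus(\Gamma_{G_X}(S^*)\cup\{r\})$; it then applies Hall's theorem together with Claim~\ref{clm: S*02} to produce a fresh perfect matching $A_2$ of the induced subgraph on $(S^*\setminus i)\cup(\Gamma_{G_X}(S^*)\setminus u)$, and outputs $A_1\cup A_2\cup\{(i,u)\}$. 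You instead verify Hall's condition once, globally, for the reduced graph $G_X-r-i-u$, handling an arbitrary $T\subseteq V_1\setminus i$ by an inclusion--exclusion comparison of $\Gamma_{G_X-r}(T)$ with $\Gamma_{G_X-r}(S^*)$ and reducing everything to the single inequality $|\Gamma_{G_X-r}(T)\cap\Gamma_{G_X-r}(S^*)|\ge|T\cap S^*|+1$, settled via $u$ when $T\cap S^*=\emptyset$ and via Claim~\ref{clm: S*02} otherwise (your observation that $i\in S^*\setminus T$ forces $T\cap S^*\subsetneq S^*$ is exactly the point where properness is needed, and it is sound). Both arguments consume the same ingredients --- (P1), the tightness of $S^*$ from Claim~\ref{clm: S*01}, and Claim~\ref{clm: S*02} --- but the paper's version localizes the new Hall application to the small induced subgraph and makes the output matching explicit, whereas yours trades that explicitness for a uniform neighbourhood estimate; your route avoids having to reason about how the old matching $A$ restricts to $S^*$, at the cost of the inclusion--exclusion bookkeeping. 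I see no gap in your argument.
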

\begin{proof}[Proof of the claim]
Fix any $i\in S^*$ and $u\in \Gamma_{G_X}(S^*)$ with $(i, u)\in E$.
Note that $r \not\in \Gamma_{G_X}(S^*)$ by Claim~\ref{clm: S*01}, and hence $u \neq r$. 

Since $X$ satisfies (P1), $G_X - r$ has a perfect matching $A$. 
In $A$, it is obvious that every vertex in $S^*$ is matched to a vertex in $\Gamma_{G_X -r}(S^*)$.  
Conversely, every vertex in $\Gamma_{G_X -r}(S^*)$ is matched to a vertex in $S^*$ as $|S^*| = |\Gamma_{G_X -r}(S^*)|$ (see the proof of Claim~\ref{clm: S*01}). 
Thus, by removing the edges between $S^*$ and $\Gamma_{G_X}(S^*)$ from $A$, 
we obtain a matching  $A_1 \subseteq A$ that exactly covers  $V_1\setminus S^*$ and  $V_2\setminus (\Gamma_{G_X}(S^*)\cup \{r\})$.

Let $G'_X$ be the subgraph of $G_X$ induced by $(S^*\setminus i)\cup (\Gamma_{G_X}(S^*)\setminus u)$.
We now show that $G'_X$ has a perfect matching. 
Consider any $S\subseteq S^*\setminus i$.
If $S=\emptyset$, then it clearly holds that $|S|\le |\Gamma_{G'_{X}}(S)|$.
If $S\neq \emptyset$, then $|S|+1\le |\Gamma_{G_X-r} (S)| \le  |\Gamma_{G'_X} (S) \cup u| =  |\Gamma_{G'_X} (S)|+1$, where the first inequality is by Claim~\ref{clm: S*02}.  
Therefore, $|S|\le |\Gamma_{G'_{X}}(S)|$ holds for any $S\subseteq S^*\setminus i$, and hence 
$G'_X$ has a perfect matching $A_2$ by Hall's marriage theorem.

Then, $A_1\cup A_2 \cup \{(i, u)\}$ is a desired perfect matching in $G_X-r$. 
\end{proof}
\begin{figure}[tbp]
  \begin{center}
   \includegraphics[scale=0.5]{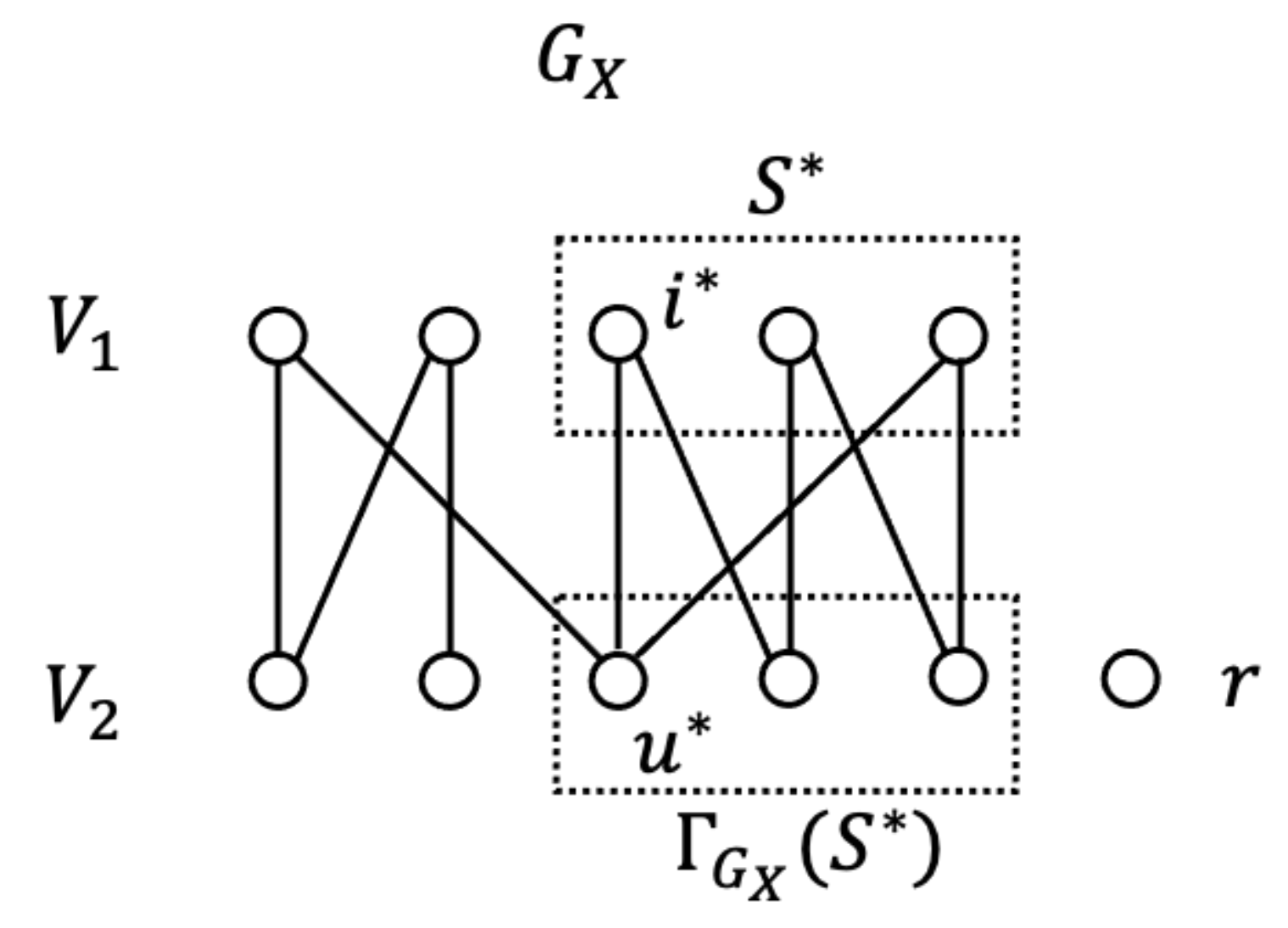}
 \end{center}
  \caption{A \PROPavg-graph $G_X$ corresponding to $X$ in the proof of Proposition~\ref{thm: update}. A non-empty minimal subset $S^*$ of $V_1$ such that $|S^*|+1 > |\Gamma_{G_X}(S^*)|$ is represented by vertices in the above dotted rectangle.}
  \label{fig: 3}
\end{figure}

Fix any agent $i^* \in S^*$.
Since $(i^*,r)\not\in E$ by Claim~\ref{clm: S*01}, by applying Lemma~\ref{lem: removable} to agent $i^*$, 
we obtain $u^*\in V_2$ and $g\in X_{u^*}$ satisfying the conditions in Lemma~\ref{lem: removable}.
See also Figure~\ref{fig: 3}.
Let $X'=(X'_u)_{u\in V_2}$ be the allocation to $V_2$ obtained from $X$ by moving $g$ to $X_r$ and let $G_{X'}=(V_1, V_2; E')$ be the \PROPavg-graph corresponding to $X'$.
Then, the conditions in Lemma~\ref{lem: removable} show that $(i^*, u^*)\in E \cap E'$ and $|X_{u^*}|\ge 2$. 
We also see that $E'$ satisfies the following. 

\begin{claim}\label{cl: keepingedges}
For any $i\in V_1$ and $u\in V_2\setminus u^*$, if $(i, u) \in E$ then $(i, u)\in E'$. 
\end{claim}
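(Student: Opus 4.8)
The plan is to understand what the claim is really asserting and then reduce it to a direct comparison of the edge-defining inequalities for $G_X$ and $G_{X'}$. Recall that $X'$ is obtained from $X$ by moving a single good $g \in X_{u^*}$ into $X_r$, so the only bundles that change are $X_{u^*}$ (which loses $g$) and $X_r$ (which gains $g$). The claim restricts attention to vertices $u \in V_2 \setminus u^*$, i.e.\ the bundles whose \emph{own} value $v_i(X_u)$ is unchanged. So for such a $u$, the edge condition $(i,u) \in E$ versus $(i,u) \in E'$ differs only through the summation term $\frac{1}{n-1}\sum_{u' \in V_2 \setminus \{r,u\}} m_i(X_{u'})$, and I want to show this quantity does not decrease when we pass from $X$ to $X'$.

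First I would write down the two edge-defining inequalities explicitly. For $u \in V_2 \setminus u^*$ we have $X'_u = X_u$, hence $v_i(X'_u) = v_i(X_u)$, and the only terms in the sum $\sum_{u' \in V_2 \setminus \{r,u\}} m_i(X'_{u'})$ that could differ from the corresponding sum over $X$ are those indexed by $u' = u^*$ (note $u^* \neq r$ and, since we chose $u \neq u^*$, the index $u^*$ genuinely appears in the sum, so its $m_i$-value matters). Thus the whole question collapses to a single inequality: $m_i(X'_{u^*}) \ge m_i(X_{u^*})$. This is exactly the statement that removing a good from a bundle cannot decrease the value of its least-valued good — indeed $X'_{u^*} = X_{u^*} \setminus g \subseteq X_{u^*}$, and the minimum of a subset of nonnegative values is at least the minimum over the larger set (with the convention $m_i(\emptyset)=0$, which is fine here because the condition $|X_{u^*}| \ge 2$ from Lemma~\ref{lem: removable} guarantees $X'_{u^*} \neq \emptyset$, though monotonicity of $m_i$ actually holds regardless).

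Putting these together: for $u \in V_2 \setminus u^*$, the left-hand side of the edge condition for $G_{X'}$ is $v_i(X'_u) + \frac{1}{n-1}\sum_{u' \in V_2 \setminus \{r,u\}} m_i(X'_{u'})$, which equals $v_i(X_u) + \frac{1}{n-1}\sum_{u' \in V_2 \setminus \{r,u\}} m_i(X_{u'})$ except that the term $m_i(X_{u^*})$ has been replaced by $m_i(X'_{u^*}) \ge m_i(X_{u^*})$; hence the $G_{X'}$ quantity is at least the $G_X$ quantity. Therefore, if the $G_X$ inequality meets the threshold $\tfrac1n$ (i.e.\ $(i,u)\in E$), so does the $G_{X'}$ inequality (i.e.\ $(i,u)\in E'$). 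I do not expect any genuine obstacle here; the claim is essentially a monotonicity bookkeeping step, and the only subtlety is being careful that the index $u^*$ actually appears in the relevant sums — which it does precisely because we restrict to $u \neq u^*$ and because $u^* \neq r$. The role of this claim in the larger argument is that it certifies the edges needed to repair the matching around $S^*$ survive the move, so I would keep the proof short and pointed.
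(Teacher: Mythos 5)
Your argument is essentially the paper's own proof: the only quantity that can move in the edge-defining inequality for $u \neq u^*$ is the term $m_i(X_{u^*})$ inside the sum, and since $|X_{u^*}| \ge 2$ we have $m_i(X'_{u^*}) = m_i(X_{u^*} \setminus g) \ge m_i(X_{u^*})$, so the left-hand side cannot decrease. One small inaccuracy: the case $u = r$ is allowed by the claim, and there $X'_r = X_r \cup g \neq X_r$, so your assertion that $X'_u = X_u$ and $v_i(X'_u) = v_i(X_u)$ for all $u \in V_2 \setminus u^*$ is false; it does not damage the argument, because $v_i(X'_r) = v_i(X_r) + v_i(g) \ge v_i(X_r)$ and the sum for the edge $(i,r)$ excludes the index $r$ anyway, but you should state the comparison as an inequality $v_i(X'_u) \ge v_i(X_u)$ (as the paper implicitly does) rather than an equality.
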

\begin{proof}[Proof of the claim]
Since $|X_{u^*}|\ge 2$, we have $m_i(X'_{u^*})=m_i(X_{u^*}\setminus g)\ge m_i(X_{u^*})$ for any agent $i\in V_1$.
Hence, for any $i\in V_1$ and $u\in V_2\setminus u^*$ with $(i, u) \in E$, we obtain 
$$
v_i(X'_u) + \frac{1}{n-1}\sum_{u'\in V_2 \setminus \{r, u\}} m_i(X'_{u'}) \ge
v_i(X_u) + \frac{1}{n-1}\sum_{u'\in V_2 \setminus \{r, u\}} m_i(X_{u'}) \ge \frac{1}{n}, 
$$
which shows that $(i, u)\in E'$. 
\end{proof}

By Claim~\ref{cl: matching} and $(i^*, u^*)\in E$, there exists a perfect matching $A$ in $G_X-r$ in which $i^*$ matches $u^*$.
Then, by Claim~\ref{cl: keepingedges} and $(i^*, u^*)\in E'$, we see that $A \subseteq E'$, that is, $A$ is a perfect matching also in $G_{X'}-r$.
Therefore, $X'$ satisfies (P1).
Since $|X'_r|=|X_r|+1$ clearly holds by definition, 
$X'$ satisfies the conditions in Proposition~\ref{thm: update}. 
\end{proof}

\section{Finding a \PROPavg Allocation in Polynomial Time}\label{sec:poly}
In this section, we show how to find a \PROPavg allocation in polynomial time.
As mentioned in Section~\ref{sec:mainproof}, Algorithm~\ref{alg:01} runs in pseudo-polynomial time.
This is because we can not guarantee the polynomial solvability in line 1 of Algorithm~\ref{alg:01}.
We can see that the other parts of Algorithm~\ref{alg:01} run in polynomial time as follows.
In line 2, we can check (P2) in polynomial time by applying a maximum
matching algorithm for each $G_X - u$.
In line 3, it suffices to find a good $g \in \bigcup_{u \in V_2 \setminus
r} X_u$ such that (P1) is kept after moving $g$.
Since (P1) can be checked in polynomial time, 
this can be done in polynomial time by considering all $g$ in a
brute-force way.
Finally, line 5 is executed in polynomial time by Lemma~\ref{lem: termination}.
Note that we can speed up lines 2 and 3 by using the
DM-decomposition of $G_X$ \cite{dulmage1958coverings,
dulmage1959structure},
but we do not go into details, because we only focus on the polynomial solvability.

Let us now consider how to find an initial allocation $X$ to $V_2$ satisfying (P1) in polynomial time.
Our idea is to use a recursive algorithm.
That is, we use a \PROPavg allocation of $M$ to $n-1$ agents as an initial allocation $X$ to $V_2$ satisfying (P1).
Indeed, if it holds that $v_i(g)\le \frac{1}{n}$ for any agent $i\in N$ and any good $g\in M$, then we can show that a \PROPavg allocation of $M$ to $n-1$ agents satisfies (P1) as follows.

\begin{lemma}\label{lem: rec}
Suppose that for any agent $i\in N$ and any good $g\in M$, we have $v_i(g)\le \frac{1}{n}$.
Let $(X_1,\ldots ,X_{n-1})$ be a \PROPavg allocation for $N\setminus n$.
Then, $X=(X_1,\ldots ,X_{n-1}, X_n )$ is an allocation to $V_2=[n]$ satisfying (P1), where $X_n = \emptyset$ and the specific element $r\in V_2$ is equal to $n$.
\end{lemma}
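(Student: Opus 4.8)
The plan is to exhibit an explicit perfect matching in $G_X - r$, namely the identity matching that pairs each agent $i \in V_1 = [n-1]$ with the bundle $X_i$ it already holds in the given \PROPavg allocation for $N\setminus n$. Since $r = n$ and $V_2\setminus r = [n-1]$, it suffices to prove that the edge $(i,i)$ belongs to $G_X$ for every $i \in [n-1]$; the collection $\{(i,i)\mid i \in [n-1]\}$ is then a perfect matching in $G_X - r$, which is exactly property (P1).

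First I would write out what the edge $(i,i)$ requires. Setting $r = n$ and $u = i$ in the definition of the \PROPavg-graph, the summation runs over $V_2\setminus\{n,i\} = [n-1]\setminus i$, so $(i,i)\in E$ is equivalent to
$$v_i(X_i) + \frac{1}{n-1}\sum_{k\in [n-1]\setminus i} m_i(X_k)\ \ge\ \frac{1}{n}.$$
On the other hand, since $(X_1,\ldots,X_{n-1})$ is \PROPavg for the $n-1$ agents in $N\setminus n$, agent $i$ is \PROPavg-satisfied there, which (reading the \PROPavg condition with $n-1$ in place of $n$) means
$$v_i(X_i) + \frac{1}{n-2}\sum_{k\in [n-1]\setminus i} m_i(X_k)\ \ge\ \frac{1}{n-1}.$$

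The core of the argument is to pass from the second inequality to the first. Writing $S = \sum_{k\in [n-1]\setminus i} m_i(X_k)$, the hypothesis gives $v_i(X_i)\ge \frac{1}{n-1}-\frac{S}{n-2}$, hence
$$v_i(X_i) + \frac{S}{n-1}\ \ge\ \frac{1}{n-1} - S\left(\frac{1}{n-2}-\frac{1}{n-1}\right) = \frac{1}{n-1} - \frac{S}{(n-2)(n-1)}.$$
Here is where the small-item assumption enters: each of the $n-2$ terms in $S$ satisfies $m_i(X_k)\le \frac{1}{n}$ (it is $0$ if $X_k=\emptyset$, and otherwise the value of some good, which is at most $\frac{1}{n}$ by hypothesis), so $S\le \frac{n-2}{n}$. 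Substituting this bound yields $\frac{1}{n-1}-\frac{S}{(n-2)(n-1)}\ge \frac{1}{n-1}-\frac{1}{n(n-1)}=\frac{1}{n}$, which establishes $(i,i)\in E$ and finishes the proof.

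The only real subtlety, rather than a genuine obstacle, is the mismatch of the normalizing constants: the inductive hypothesis carries the coefficient $\frac{1}{n-2}$ and threshold $\frac{1}{n-1}$, whereas the target edge condition carries $\frac{1}{n-1}$ and $\frac{1}{n}$. The assumption $v_i(g)\le \frac{1}{n}$ is precisely what compensates for this gap, through the bound $S\le \frac{n-2}{n}$. (Implicitly this uses $n\ge 3$ so that $\frac{1}{n-2}$ is well defined; the base of the recursion, a single agent receiving all of $M$, is trivially handled separately.)
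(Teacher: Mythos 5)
Your proposal is correct and follows essentially the same route as the paper: exhibit the identity matching $\{(i,i)\mid i\in[n-1]\}$ in $G_X-r$, and derive the edge condition from the \PROPavg guarantee for $n-1$ agents by using $\sum_{k\in[n-1]\setminus i} m_i(X_k)\le \frac{n-2}{n}$, which is exactly how the paper compensates for the shift from the constants $\bigl(\frac{1}{n-1},\frac{1}{n-2}\bigr)$ to $\bigl(\frac{1}{n},\frac{1}{n-1}\bigr)$. The only cosmetic difference is that the paper packages the slack as an explicitly nonnegative additive term rather than rearranging and bounding as you do.
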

\begin{proof}
Let $G_X=(V_1, V_2; E)$ be the \PROPavg-graph corresponding to $X$.
It is enough to show that $(i, X_i)\in E$ for any $i\in [n-1]$.
Fix any $i\in [n-1]$.
We obtain that 
\begin{align*}
v_i(X_i) &\ge \frac{1}{n-1}-\frac{1}{n-2}\sum_{j\in [n-1]\setminus i} m_i(X_j)\\
&=\frac{1}{n}-\frac{1}{n-1}\sum_{j\in [n-1]\setminus i} m_i(X_j)\\
&+\underbrace{ \frac{1}{n-1}\left(\frac{1}{n} -\frac{1}{n-2}\sum_{j\in [n-1]\setminus i} m_i(X_j)\right)}_{\ge 0}\\
&\ge \frac{1}{n}-\frac{1}{n-1}\sum_{j\in [n-1]\setminus i} m_i(X_j), 
\end{align*}
where the first inequality follows from the assumption that $(X_1,\ldots ,X_{n-1})$ is a \PROPavg allocation and the second inequality follows from the assumption that $v_i(g) \le \frac{1}{n}$ for any $i\in N$ and $g\in M$.
This implies that $(i, X_i)\in E$ and thus $X$ is an allocation to $V_2=[n]$ satisfying (P1).
\end{proof}

Unfortunately, the argument in Lemma~\ref{lem: rec} does not work without the assumption that $v_i(g)\le \frac{1}{n}$ for any $i\in N$ and $g\in M$.
To elude this difficulty, our algorithm applies preprocessing.
This preprocessing allocates $g$ to $i$ and remove $i$ and $g$ from our instance as long as there exists an agent $i$ and a good $g$ such that $v_i(g)\ge \frac{1}{n}$.
See Algorithm~\ref{alg:02} for the entire algorithm.

If this preprocessing removes at least one agent from our instance, then our algorithm recursively computes a \PROPavg allocation for the remaining agents and goods, and returns the overall allocation together with the removed agents.
In order to verify that the returned allocation is a \PROPavg allocation for $n$ agents, we need a refined condition (see line 7 of Algorithm~\ref{alg:02}).

Otherwise, our algorithm recursively computes a \PROPavg allocation for $n-1$ agents.
Since $v_i(g)< \frac{1}{n}$ holds for any agent $i$ and good $g$,  
we can use this allocation as an initial allocation to $V_2$ satisfying (P1) by Lemma~\ref{lem: rec}.
The rest of our algorithm finds an allocation to $V_2$ satisfying (P2) and returns a \PROPavg allocation as in Algorithm~\ref{alg:01}.

In the remaining part of this section, we show the correctness and the polynomial solvability of Algorithm~\ref{alg:02} .
The following lemma shows that if the preprocessing removes at least one agent from our instance, then the algorithm returns a legal \PROPavg allocation for $N$.
\begin{algorithm}[htb]
\caption{ Algorithm for finding a \PROPavg allocation in polynomial time}
\label{alg:02}
\begin{algorithmic}[1]
\Procedure{PROPavg}{$N$, $M$, $\{v_i\}_{i \in N}$}
\If{$|N|=1$}
\State \Return $X=(M)$
\Else
\State $N_1 \leftarrow N, N_2 \leftarrow \emptyset$
\State $M_1 \leftarrow M, M_2 \leftarrow \emptyset$
\While{$\exists i \in N_1$ and $\exists g\in M_1$ s.t. $v_i(g) \ge \frac{v_i(M)}{|N|}-\frac{1}{|N|-1}\sum_{j\in N_2} m_i(X_j)$}
\State $X_i \leftarrow \{g\}$
\State $N_1\leftarrow N_1\setminus i, N_2 \leftarrow N_2\cup i$
\State $M_1\leftarrow M_1\setminus g, M_2 \leftarrow M_2\cup g$
\EndWhile
\State Let $N_1=\{1,\ldots, l\}$ and $N_2=\{l+1,\ldots, |N|\}$, renumbering if necessary.
\If{$|N_2|\ge 1$}
\State $(X_1,\ldots, X_l) \leftarrow$ { \sc PROPavg}($N_1$, $M_1$, $\{v_i\}_{i \in N_1}$)
\State \Return $X=(X_1,\ldots, X_{|N|})$
\Else
\State $(X_1,\ldots, X_{n-1}) \leftarrow$ { \sc PROPavg}($N\setminus n$, $M$, $\{v_i\}_{i \in N\setminus n}$)
\Comment{$N_1=N, M_1=M$}
\State Apply Lemma~\ref{lem: rec} to obtain an allocation $X=(X_1,\ldots, X_{n})$ satisfying (P1). 
\While{$X$ does not satisfy (P2)}
    \State Apply Proposition~\ref{thm: update} to $X$ and obtain another allocation $X'$ to $V_2$. 
    \State $X \leftarrow X'$.
\EndWhile
\State Apply Lemma~\ref{lem: termination} to obtain a \PROPavg allocation $X=(X_1,\ldots, X_{|N|})$ to $N$.
\State \Return $X=(X_1,\ldots, X_{|N|})$
\EndIf
\EndIf
\EndProcedure
\end{algorithmic}
\end{algorithm}

\begin{lemma}\label{lem: com}
In line 14 of Algorithm~\ref{alg:02}, $X=(X_1,\ldots, X_{|N|})$ is a \PROPavg allocation to $N$.
\end{lemma}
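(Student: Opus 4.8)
The plan is to verify that every agent $i\in N$ is \PROPavg-satisfied by the allocation $X=(X_1,\dots,X_{|N|})$ present at line~14, splitting $N$ into the set $N_2$ of agents removed by the while-loop (each holding a singleton bundle $X_j=\{g_j\}$) and the set $N_1$ of agents handled by the recursive call. Write $n=|N|$ and $l=|N_1|$, so that $|N_2|=n-l\ge 1$. I would record two facts at the outset: (i) the partition $M=M_1\sqcup M_2$ with $M_2=\{g_j:j\in N_2\}$ gives the identity $v_i(M)=v_i(M_1)+\sum_{j\in N_2} m_i(X_j)$, since each $X_j$ with $j\in N_2$ is a single good and hence $m_i(X_j)=v_i(g_j)$; and (ii) by the induction hypothesis the recursive call returns a \PROPavg allocation of $M_1$ to $N_1$, i.e.\ $v_i(X_i)+\frac{1}{l-1}\sum_{k\in N_1\setminus i} m_i(X_k)\ge \frac{v_i(M_1)}{l}$ for every $i\in N_1$.

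For $i\in N_2$ the claim is immediate. At the while-loop iteration in which $i$ was removed, the loop guard reads $v_i(X_i)=v_i(g_i)\ge \frac{v_i(M)}{n}-\frac{1}{n-1}\sum_{j} m_i(X_j)$, the sum ranging over the agents already in $N_2$ at that moment. These agents form a subset of $N\setminus i$ and each $m_i(X_k)\ge 0$, so extending the sum to all of $N\setminus i$ only enlarges the left-hand side and yields the \PROPavg inequality for $i$.

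The main work is the case $i\in N_1$. Put $T=\sum_{k\in N_1\setminus i} m_i(X_k)$ and $S=\sum_{j\in N_2} m_i(X_j)\,(=v_i(M_2))$, so that the target is $v_i(X_i)+\frac{1}{n-1}(T+S)\ge \frac{v_i(M)}{n}$ with $v_i(M)=v_i(M_1)+S$. I would substitute the recursive lower bound $v_i(X_i)\ge \frac{v_i(M_1)}{l}-\frac{T}{l-1}$ and rearrange, reducing the target to the single scalar inequality $v_i(M_1)\,\frac{n-l}{ln}+\frac{S}{n(n-1)}\ge T\,\frac{n-l}{(n-1)(l-1)}$. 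To bound the right-hand side I would invoke the termination of the while-loop: since no surviving pair $(i,g)$ with $g\in M_1$ exists, every $g\in M_1$ satisfies $v_i(g)<\frac{v_i(M)}{n}-\frac{S}{n-1}=:\alpha_i$; as each $m_i(X_k)$ for $k\in N_1\setminus i$ equals the value of some good of $M_1$ (or is $0$), this gives $T\le (l-1)\alpha_i$, whence $T\,\frac{n-l}{(n-1)(l-1)}\le \frac{n-l}{n-1}\alpha_i$. It then remains to check $v_i(M_1)\,\frac{n-l}{ln}+\frac{S}{n(n-1)}\ge \frac{n-l}{n-1}\alpha_i$; after rewriting $\alpha_i=\frac{v_i(M_1)}{n}-\frac{S}{n(n-1)}$ via the identity above, this is a comparison of the coefficients of $v_i(M_1)$ and of $S$ that I expect to be routine.

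The crux, and the step I expect to be the main obstacle, is precisely this final coefficient check: passing from the recursion's $\frac{1}{l-1}$-averaging to the required $\frac{1}{n-1}$-averaging loses value, and one must show the loss is compensated by the fact that the goods remaining in $M_1$ are small for the agents of $N_1$. The decisive structural input is $|N_2|\ge 1$, i.e.\ $l\le n-1$, which makes $n-1-l\ge 0$; this sign is exactly what renders the coefficient of $v_i(M_1)$ (which simplifies to $\frac{(n-l)(n-1-l)}{nl(n-1)}$) and of $S$ nonnegative. Finally I would dispose of the degenerate case $l=1$ separately, where $\frac{1}{l-1}$ is vacuous: there $N_1=\{i\}$, the recursion gives $X_i=M_1$ and $T=0$, and the target $v_i(M_1)+\frac{S}{n-1}\ge \frac{v_i(M_1)+S}{n}$ follows at once from $v_i(M_1)\ge \frac{v_i(M_1)}{n}$ and $\frac{S}{n-1}\ge \frac{S}{n}$.
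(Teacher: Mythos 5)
Your proposal is correct and follows essentially the same route as the paper's proof: the same three-way case split ($i\in N_2$ via the loop guard; $i\in N_1$ with $l=1$; $i\in N_1$ with $l\ge 2$ via the recursive \PROPavg guarantee combined with the negated guard bound $m_i(X_k)<\alpha_i$), reducing in the main case to the same nonnegativity check of the coefficient $\frac{(n-l)(n-l-1)}{l(n-1)\cdot(\cdot)}$, whose sign rests on $l\le n-1$ exactly as you identify. Your bookkeeping (isolating $T$ and bounding it by $(l-1)\alpha_i$ rather than summing and rescaling the per-bundle inequalities) and your shorter treatment of $l=1$ are only cosmetic differences.
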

\begin{proof}
Fix any $i\in N$. We show that $i$ is \PROPavg-satisfied by $X$.

\begin{description}
\item[Case 1:]  $i \in N_2$

In this case, agent $i$ receives exactly one good in the while statement.
By the while condition, we have 
\begin{align*}
v_i(X_i) &\ge \frac{1}{n}-\frac{1}{n-1}\sum_{j\in N_2} m_i(X_j)\\
& \ge \frac{1}{n}-\frac{1}{n-1}\sum_{j\in N\setminus i} m_i(X_j).
\end{align*}
Thus, $i$ is \PROPavg-satisfied by $X$.

\item[Case 2:] $i\in N_1$ and $l=1$

In this case, we have
\begin{align*}
v_i(X_i) =v_i(M_1)&= v_i(M)-\sum_{j\in N_2} m_i(X_j)\\
& =  \left(\frac{1}{n}-\frac{1}{n-1}\sum_{j\in N_2} m_i(X_j)\right)+\underbrace{\left(\frac{n-1}{n}-\frac{n-2}{n-1}\sum_{j\in N_2} m_i(X_j)\right)}_{\ge 0}\\
&\ge \frac{1}{n}-\frac{1}{n-1}\sum_{j\in N_2} m_i(X_j) \\
&= \frac{1}{n}-\frac{1}{n-1}\sum_{j\in N\setminus i} m_i(X_j), \\ 
\end{align*}
where the inequality follows from $\frac{n-1}{n}\ge \frac{n-2}{n-1}\ge \frac{n-2}{n-1}\sum_{j\in N_2} m_i(X_j)$.
\item[Case 3:]  $i \in N_1$ and $l\ge 2$

Since $(X_1,\ldots, X_l)$ is a \PROPavg allocation of $M_1$ to $N_1$, we have
\begin{align}\label{eq:1}
v_i(X_i) &\ge \frac{v_i(M_1)}{l} -\frac{1}{l-1}\sum_{j\in N_1\setminus i} m_i(X_j)\nonumber \\
&= \frac{1}{l}-\frac{1}{l}\sum_{j\in N_2} m_i(X_j)-\frac{1}{l-1}\sum_{j\in N_1\setminus i} m_i(X_j).
\end{align}
In line 13 of Algorithm~\ref{alg:02}, the while condition in line 7 does not hold for agent $i$.
Thus, it holds that 
\begin{equation}\label{eq:2}
m_i(X_j) < \frac{1}{n}-\frac{1}{n-1}\sum_{j'\in N_2} m_i(X_{j'})
\end{equation}
for any $j\in N_1\setminus i$.
Summing up  inequality~(\ref{eq:2}) for each $j\in N_1\setminus i$, we obtain
\begin{equation}\label{eq:3}
\sum_{j\in N_1\setminus i} m_i(X_j) < \frac{l-1}{n}-\frac{l-1}{n-1}\sum_{j\in N_2} m_i(X_j).
\end{equation}
By multiplying inequality~(\ref{eq:3}) by $\frac{n-l}{l(l-1)}>0$ and rearranging, we have
\begin{equation}\label{eq:4}
0  > -\frac{n-l}{ln}+\frac{n-l}{l(n-1)}\sum_{j\in N_2} m_i(X_j)+\frac{n-l}{l(l-1)}\sum_{j\in N_1\setminus i} m_i(X_j).
\end{equation}
Summing up inequalities~(\ref{eq:1}) and (\ref{eq:4}), we have 
\begin{equation}\label{eq:5}
v_i(X_i) > \frac{1}{n}+\left(-\frac{1}{l}+\frac{n-l}{l(n-1)}\right)\sum_{j\in N_2} m_i(X_j)+\left(-\frac{1}{l-1}+\frac{n-l}{l(l-1)}\right)\sum_{j\in N_1\setminus i} m_i(X_j).
\end{equation}
Since $2\le l < n$, 
by direct calculation, we have
\begin{equation*}
-\frac{1}{l}+\frac{n-l}{l(n-1)}+\frac{1}{n-1}= \frac{1}{l(n-1)} \ge 0
\end{equation*}
and
\begin{align*}
-\frac{1}{l-1}+\frac{n-l}{l(l-1)}+\frac{1}{n-1}&= \frac{1}{l(l-1)(n-1)}\left(-l(n-1)+(n-l)(n-1)+l(l-1)\right)\\
&=\frac{(n-l)(n-l-1)}{l(l-1)(n-1)} \\
& \ge 0.
\end{align*}

Applying these inequalities to inequality~(\ref{eq:5}), we finally obtain
\begin{align*}
v_i(X_i) &> \frac{1}{n}-\frac{1}{n-1}\left(\sum_{j\in N_2} m_i(X_j)+\sum_{j\in N_1\setminus i} m_i(X_j)\right).\\
&=\frac{1}{n}-\frac{1}{n-1}\sum_{j\in N\setminus i} m_i(X_j), 
\end{align*}
which implies that $i$ is \PROPavg-satisfied by $X$.
\end{description}
Therefore, $X$ is a \PROPavg allocation to $N$ in line 16 of Algorithm~\ref{alg:02}.
\end{proof}

We finally give the proof of Theorem~\ref{thm: main} by showing that Algorithm~\ref{alg:02} is a polynomial time algorithm to find a \PROPavg allocation.

\begin{proof}[Proof of Theorem~\ref{thm: main}]
We first show the correctness of Algorithm~\ref{alg:02}.
If $|N|=1$, our algorithm obviously returns a \PROPavg allocation in line 3.
Assume that $|N|\ge 2$.
If $|N_2|\ge 1$, it returns a \PROPavg allocation in line 14 by Lemma~\ref{lem: com}.
Otherwise, since the while condition in line 7 does not hold for any agent in $N_1$, $v_i(g)< \frac{1}{n}$ holds for any agent $i\in N_1$ and good $g\in M_1$ in line 16.
Thus, $X=(X_1, \ldots, X_n)$ satisfies (P1) by Lemma~\ref{lem: rec}, where $X_n$ is an empty set.
The rest of the algorithm finds an allocation to $V_2$ satisfying (P2) and returns a \PROPavg allocation as in Algorithm~\ref{alg:01}.
Therefore, Algorithm~\ref{alg:02} returns a \PROPavg allocation in all cases.

We finally show that Algorithm~\ref{alg:02} completes in time polynomial in the number of agents and items.
Let $T(n,m)$ be the worst case time complexity of Algorithm~\ref{alg:02} when $|N|=n$ and $|M|=m$.
Clearly, $T(1,m)=O(m)$.
We can check the while condition in line 7 and execute the body of the while loop in polynomial time of $n$ and $m$.
In addition, as mentioned at the beginning of Section~\ref{sec:poly}, 
Lines 17 to 22 can be executed in polynomial time of $n$ and $m$.
Thus, $T(n,m)$ can be expressed as
$$T(n,m)=O({\rm poly}(n,m))+\max\{\max_{\substack{1\le n' \le n-1\\ 1\le m' \le m-1}}T(n',m'), T(n-1, m)\}.$$
Therefore, $T(n,m)$ is polynomially bounded in $n$ and $m$.
\end{proof}

\section{Conclusion}\label{sec: discussion}
In this paper, we have introduced \PROPavg, which is a stronger notion than
\PROPm, and shown that a \PROPavg allocation always exists and can be computed in polynomial time when each agent has a non-negative additive valuation.
In order to devise our algorithm, we have developed a new technique that
generalizes the cut-and-choose protocol.
This technique is interesting by itself and seems to have a potential for
further applications.
In fact, we can define a bipartite graph like the \PROPavg-graph for
another fairness notion, and our argument works if we obtain an allocation
satisfying a (P2)-like condition.
We expect that this technique will be used in other contexts as well.

There are still several future work.
Whether \textsf{Avg-EFX}, which is a stronger notion than
\PROPavg exists for four or more agents is an interesting open problem.
Another direction is to consider weighted approximate proportionality.
In the weighted case, each agent $i$ has a non-negative weight $\alpha_i$, where $\alpha_1+\cdots + \alpha_n=1$.
The goal is to find an allocation $X$ such that $v_i(X_i)\ge \frac{\alpha_i}{n}-d_i(X)$ for each agent $i\in N$.
Whether weighted \textsf{PROPavg} allocation exists is also an interesting problem.

\subsection*{Acknowledgments}
This work was partially supported by the joint project of Kyoto University
and Toyota Motor Corporation, titled ``Advanced Mathematical Science for
Mobility Society'', and
by JSPS, KAKENHI grant number JP19H05485, Japan.
\bibliography{prop} 
\bibliographystyle{plain} 

\end{document}